\newcommand{\size}[1]{\ensuremath{|#1|}}
\newcommand{\ceil}[1]{\ensuremath{\lceil#1\rceil}}
\newcommand{\floor}[1]{\ensuremath{\lfloor#1\rfloor}}
\newcommand{\Ceil}[1]{\ensuremath{\left\lceil#1\right\rceil}}
\newcommand{\Floor}[1]{\ensuremath{\left\lfloor#1\right\rfloor}}
\newcommand{\lra}[1]{\ensuremath{(#1)}}
\newcommand{\lrA}[1]{\ensuremath{\left(#1\right)}}
\newcommand{\lrC}[1]{\ensuremath{\left\{#1\right\}}}
\let\epsilon=\varepsilon
\def\OPT{\mbox{OPT}}
\def\C{\mathcal{C}}
\def\T{\mathcal{T}}
\def\F{\mathcal{F}}
\def\I{\mathcal{I}}
\def\W{\mathcal{W}}
\def\OPT{\mbox{OPT}}
\def\C{\mathcal{C}}
\newcommand{\PP}[1]{\ensuremath{\mbox{Pr}[#1]}}
\newcommand{\EE}[1]{\ensuremath{\mathbb{E}[#1]}}
\newtheorem{theorem}{Theorem}
\newtheorem{lemma}[theorem]{Lemma}
\newtheorem{definition}[theorem]{Definition}
\newtheorem{assumption}[theorem]{Assumption}
\newtheorem{claim}[theorem]{Claim}
\begin{document}

\title[MCVRP with Improved Approximation Guarantees]{Multidepot Capacitated Vehicle Routing with Improved Approximation Guarantees}


\author{\fnm{Jingyang} \sur{Zhao} }\email{jingyangzhao1020@gmail.com}

\author{\fnm{Mingyu} \sur{Xiao}}\email{myxiao@uestc.edu.cn}
\equalcont{Corresponding author}

\affil{\orgdiv{School of Computer Science and Engineering}, \orgname{University of Electronic Science and Technology of China}, \orgaddress{\street{2006 Xiyuan Ave}, \city{Chengdu}, \postcode{610054}, \state{Sichuan}, \country{China}}}

\abstract{
The Multidepot Capacitated Vehicle Routing Problem (MCVRP) is a well-known variant of the classic Capacitated Vehicle Routing Problem (CVRP), where we need to route capacitated vehicles located in multiple depots to serve customers' demand such that each vehicle must return to the depot it starts, and the total traveling distance is minimized. There are three variants of MCVRP according to the property of the demand: unit-demand, splittable and unsplittable. We study approximation algorithms for $k$-MCVRP in metric graphs, where $k$ is the capacity of each vehicle.
The best-known approximation ratios for the three versions are $4-\Theta(1/k)$, $4-\Theta(1/k)$, and $4$, respectively.
We give a $\lra{4-1/1500}$-approximation algorithm for unit-demand and splittable $k$-MCVRP, and a $(4-1/50000)$-approximation algorithm for unsplittable $k$-MCVRP.
When $k$ is a fixed integer, we give a $(3+\ln2-\max\{\Theta(1/\sqrt{k}),1/9000\})$-approximation algorithm for the splittable and unit-demand cases, and a $(3+\ln2-\Theta(1/\sqrt{k}))$-approximation algorithm for the unsplittable case.
}

\keywords{Capacitated Vehicle Routing, Multidepot, Metric, Approximation Algorithm}



\maketitle

\section{Introduction}
In the Multidepot Capacitated Vehicle Routing Problem (MCVRP), we are given a complete undirected graph $G=(V\cup D, E)$ with an edge weight function $w$ satisfying the symmetric and triangle inequality properties.
The $n$ nodes in $V=\{v_1,\dots,v_n\}$ represent $n$ customers and each customer $v\in V$ has a demand $d(v)\in\mathbb{Z}_{\geq 1}$.
The $m$ nodes in $D=\{u_1,\dots,u_m\}$ represent $m$ depots, with each containing an infinite number of vehicles with a capacity of $k\in\mathbb{Z}_{\geq 1}$ (we can also think that each depot contains only one vehicle, which can be used many times).
A tour is a walk that begins and ends at the same depot and the sum of deliveries to all customers in it is at most $k$.
The traveling distance of a tour is the sum of the weights of edges in the tour.
In MCVRP, we wish to find a set of tours to satisfy every customer's demand with a minimum total distance of all the tours.
In the \emph{unsplittable} version of the problem, each customer's demand can only be delivered by a single tour.
In the \emph{splittable} version, each customer's demand can be delivered by more than one tour.
Moreover, if each customer's demand is a unit, it is called the \emph{unit-demand} version.

In logistics, MCVRP is an important model that has been studied extensively in the literature (see \cite{montoya2015literature} for a survey). If there is only one depot, MCVRP is known as the famous Capacitated Vehicle Routing Problem (CVRP), and hence $k$-MCVRP is APX-hard for any fixed $k\geq3$~\cite{AsanoKTT97+}.

Next, we survey approximation algorithm for $k$-CVRP and $k$-MCVRP.

\textbf{$k$-CVRP}. Haimovich and Kan proposed~\cite{HaimovichK85} a well-known algorithm based on a given Hamiltonian cycle, called \emph{Iterated Tour Partitioning (ITP)}.
Given an $\alpha$-approximation algorithm for metric TSP, for splittable and unit-demand $k$-CVRP, ITP can achieve a ratio of $\alpha+1-\alpha/k$~\cite{HaimovichK85}.
For unsplittable $k$-CVRP, Altinkemer and Gavish~\cite{altinkemer1987heuristics} proposed a modification of ITP, called UITP, that can achieve a ratio of $\alpha+2-2\alpha/k$ for even $k$.
When $k=\infty$, $k$-CVRP becomes metric TSP.
For metric TSP, there is a well-known $3/2$-approximation algorithm~\cite{christofides1976worst,serdyukov1978some}, and Karlin \emph{et al.}~\cite{KarlinKG21,DBLP:conf/ipco/KarlinKG23} has slightly improved the ratio to $3/2-10^{-36}$. Recently, some progress has been made in approximating $k$-CVRP. Blauth \emph{et al.}~\cite{blauth2022improving} improved the ratio to $\alpha+1-\epsilon$ for splittable and unit-demand $k$-CVRP and to $\alpha+2-2\epsilon$ for unsplittable $k$-CVRP, where $\epsilon$ is a value related to $\alpha$ and satisfies $\epsilon\approx1/3000$ when $\alpha=3/2$. Then, for unsplittable $k$-CVRP, Friggstad \emph{et al.}~\cite{uncvrp} further improved the ratio to $\alpha+1+\ln2-\varepsilon$ based on an LP rounding method, where $\epsilon\approx1/3000$ is the improvement based on the method in~\cite{blauth2022improving}.
There are other improvements for the case that $k$ is a small fixed integer. Bompadre \emph{et al.}~\cite{BompadreDO06} improved the classic ratios by a term of ${\Omega}\lra{1/k^3}$ for all three versions. Zhao and Xiao~\cite{zhaocvrp} proposed a $(5/2-\Theta(1/\sqrt{k}))$-approximation algorithm for splittable and unit-demand $k$-CVRP and a $(5/2+\ln2-\Theta(1/\sqrt{k}))$-approximation algorithm for unsplittable $k$-CVRP, where the term $\Theta(1/\sqrt{k})$ is larger than $1/3000$ for all $k\leq 10^7$.

\textbf{$k$-MCVRP}. Few results are available in the literature.
Based on a modification of ITP, Li and Simchi-Levi~\cite{tight} proposed a cycle-partition algorithm, which achieves a ratio of $2\alpha+1-\alpha/k$ for splittable and unit-demand $k$-MCVRP and a ratio of $2\alpha+2-2\alpha/k$ for unsplittable $k$-MCVRP. The only known improvement was made by Harks \emph{et al.}~\cite{HarksKM13}, where they proposed a tree-partition algorithm with an improved 4-approximation ratio for unsplittable $k$-MCVRP. Their algorithm also implies a 4.38-approximation ratio for a more general problem, called \emph{Capacitated Location Routing}, where we need to open some depots (with some cost) and then satisfy customers using vehicles in the opened depots.
When $k=\infty$, $k$-MCVRP becomes metric $m$-depot TSP.
For metric $m$-depot TSP, Rathinam~\emph{et al.}~\cite{rathinam2007resource} proposed a 2-approximation algorithm, and Xu~\emph{et al.}~\cite{xu2011analysis} proposed a $(2-1/m)$-approximation algorithm. Based on an edge exchange algorithm, Xu and Rodrigues~\cite{xu20153} obtained an improved $3/2$-approximation algorithm for any fixed $m$. Traub~\emph{et al.}~\cite{DBLP:journals/siamcomp/TraubVZ22} further improved the ratio to $\alpha+\varepsilon$ for any fixed $m$. Recently, Deppert~\emph{et al.}~\cite{deppert20233} obtained a randomized $(3/2+\varepsilon)$-approximation algorithm with a running time of $(1/\varepsilon)^{O(m\log m)}\cdot n^{O(1)}$. Notably, their algorithm even works with a \emph{variable} number of depots.

If $k$ is fixed, we will see that splittable $k$-MCVRP is equivalent to unit-demand $k$-MCVRP. Moreover, both unit-demand and unsplittable $k$-MCVRP can be reduced to the \emph{minimum weight $k$-set cover problem}.
In minimum weight $k$-set cover, we are given a set of elements (called \emph{universe}), a set system with each set in it having a weight and at most $k$ elements, and we need to find a collection of sets in the set system with a minimum total weight that covers the universe. In the reduction, customers can be seen as the elements. There are at most $mn^{O(k)}$ feasible tours, and each tour can be seen as a set containing all customers in the tour with a weight of the tour. When $k$ is fixed, the reduction is polynomial.
A similar reduction for $k$-CVRP was mentioned in~\cite{zhaocvrp}.
It is well-known~\cite{chvatal1979greedy} that the minimum weight $k$-set cover problem admits an approximation ratio of $H_k$, where $H_k\coloneqq 1+1/2+\cdots+1/k$ is the $k$-th harmonic number.
Hassin and Levin~\cite{hassin2005better} improved the ratio to $H_k-\Theta(1/k)$. Recently, using a non-obvious local search method, Gupta \emph{et al.}~\cite{gupta2023local} improved the ratio to $H_k-\Theta(\ln^2 k/k)$, which is better than 4 for any fixed $k\leq 30$. So, for some $k\leq 30$, the best ratios of $k$-MCVRP are $H_k-\Theta(\ln^2 k/k)$.

In our setting, vehicles must return to their starting depots, which is known as the \emph{fixed-destination} property~\cite{tight}. Li and Simchi-Levi~\cite{tight} also explored a non-fixed-destination version, where vehicles could end their routes at any depot. This version easily reduces to CVRP while maintaining the approximation ratio: one can regard all depots as a super-depot, and let the distance between a customer and the super-depot be the minimum weight of the edges between the customer and the depots.

Recently, Lai~\emph{et al.}~\cite{lai2023approximation} studied a variant of MCVRP, called \emph{$m$-Depot Split Delivery Vehicle Routing}, where the number of depots is still $m$, but the number of vehicles in each depot is limited and each vehicle can be used for at most one tour (one can also think that each depot contains only one vehicle, which can be used a limited number of times). When $m$ is fixed, they obtained a $(6-4/m)$-approximation algorithm. Carrasco Heine~\emph{et al.}~\cite{ejor/HeineDM23} considered a bifactor approximation algorithm for a variant of Capacitated Location Routing, where each depot has a capacity as well.

\subsection{Our Contributions}
Motivated by recent progress in approximating $k$-CVRP, we design improved approximation algorithms for $k$-MCVRP. For the sake of presentation, we assume $\alpha=3/2$.
The contributions are shown as follows.

Firstly, we review the cycle-partition algorithm in~\cite{tight} and then propose a refined tree-partition algorithm based on the idea in~\cite{HarksKM13}. Note that our refined algorithm has a better approximation ratio for fixed $k$. For splittable and unit-demand $k$-MCVRP, both of them are 4-approximation algorithms.
By making a trade-off between them and further using the result in~\cite{blauth2022improving}, we obtain an improved $(4-1/1500)$-approximation algorithm. It is worth noting that using solely the cycle-partition algorithm along with the result in~\cite{blauth2022improving} may yield only a $(4-1/3000)$-approximation ratio.

Secondly, using the LP-rounding method in~\cite{uncvrp}, we obtain an LP-based cycle-partition algorithm that can achieve a $(4+\ln2+\delta)$-approximation ratio for unsplittable $k$-MCVRP with any constant $\delta>0$.
By making a trade-off between the LP-based cycle-partition algorithm and the tree-partition algorithm and further using the result in~\cite{blauth2022improving}, we obtain an improved $(4-1/50000)$-approximation algorithm.

At last, we propose an LP-based tree-partition algorithm, which works for fixed $k$. Using the lower bounds of $k$-CVRP in~\cite{zhaocvrp}, we obtain an improved $(3+\ln2-\Theta(1/\sqrt{k}))$-approximation algorithm for all three versions of $k$-MCVRP, which is better than the current-best ratios for any $k>11$. Moreover, by making a trade-off between the LP-based tree-partition algorithm and the cycle-partition algorithm and further using the result in~\cite{blauth2022improving}, we show that the approximation ratio can be improved to $3+\ln2-$ $\max\{\Theta(1/\sqrt{k}),1/9000\}$ for splittable and unit-demand $k$-MCVRP.

An initial version of this paper was presented at the 29th International Conference on Computing and Combinatorics (COCOON 2023)~\cite{DBLP:conf/cocoon/ZhaoX23}.

\section{Preliminaries}\label{sec_pre}
\subsection{Definitions}
In MCVRP, we let $G=(V\cup D, E)$ denote the input complete graph, where vertices in $V$ represent customers and vertices in $D$ represent depots. There is a non-negative weight function $w: E\to \mathbb{R}_{\geq0}$ on the edges in $E$. We often write $w(u,v)$ to mean the weight of edge $uv$, instead of $w(uv)$. Note that $w(u,v)$ would be the same as the distance between $u$ and $v$.
The weight function $w$ is a semi-metric function, i.e., it is symmetric and satisfies the triangle inequality.
For any weight function $w': X\to \mathbb{R}_{\geq0}$, we extend it to the subsets of $X$, i.e., we define $w'(Y) = \sum_{x\in Y} w'(x)$ for any $Y\subseteq X$. There is a demand function $d$: $V\rightarrow\mathbb{N}_{\geq 1}$, where $d(v)$ is the demand required by customer $v\in V$. In the following, we let $\Delta=\sum_{v\in V}\min_{u\in D}d(v)w(u,v)$. For any component $S$, we use $v\in S$ (resp., $e\in S$) to denote a vertex (resp., an edge) of $S$, and let $w(S)\coloneqq\sum_{e\in S}w(e)$ and $d(S)\coloneqq\sum_{v\in S}d(v)$. For any graph $G'=(V',E')$ with a set of vertices $V''\subseteq V'$, we use $G'[V'']$ to denote the subgraph of $G'$ induced by $V''$.

A \emph{walk} in a graph is a succession of edges in the graph, where an edge may appear more than once. We will use a sequence of vertices to denote a walk. For example, $v_1v_2v_3\dots v_l$ means a walk with edges $v_1v_2$, $v_2v_3$, and so on.
A \emph{path} in a graph is a walk such that no vertex appears more than once in the sequence, and a \emph{cycle} is a walk such that only the first and the last vertices are the same.
A cycle containing $l$ edges is called an \emph{$l$-cycle} and the \emph{length} of it is $l$.
A \emph{spanning forest} in a graph is a forest that spans all vertices in the graph.
A \emph{constrained spanning forest} in graph $G$ is a spanning forest where each tree contains only one depot.

An \emph{itinerary} $I$ is a walk that starts and ends at the same depot and does not pass through any other depot. It is called an \emph{empty itinerary} and denote it by $I=\emptyset$ if there are no customer vertices on $I$, and a \emph{non-empty itinerary} otherwise.
A non-empty itinerary can be split into several minimal cycles containing only one depot, and each such cycle is called a \emph{tour}.
The Multidepot Capacitated Vehicle Routing Problem ($k$-MCVRP) can be described as follows.

\begin{definition}[$k$-MCVRP]
An instance $(G=(V\cup D,E),w,d,k)$ consists of:
\begin{itemize}
\item a complete graph $G$, where $V=\{v_1,\dots,v_n\}$ represents the $n$ customers and $D=\{u_1,\dots,u_m\}$ represents the $m$ depots,
\item a semi-metric weight function $w$: $(V\cup D)\times(V\cup D)\rightarrow\mathbb{R}_{\geq 0}$, which represents the distances,
\item a demand function $d$: $V\rightarrow\mathbb{N}_{\geq 1}$, where $d(v)$ is the demand required by customer $v\in V$,
\item the capacity $k\in\mathbb{Z}_{\geq 1}$ of vehicles that initially stays at each depot.
\end{itemize}
A feasible solution is a set of $m$ itineraries, with each containing one different depot:
\begin{itemize}
\item each tour delivers at most $k$ of the demand to customers on the tour,
\item the union of tours over all itineraries meets every customer's demand.
\end{itemize}
Specifically, the goal is to find a set of itineraries $\I=\{I_1,\dots,I_m\}$ where $I_i$ contains depot $u_i$, minimizing the total weight of the succession of edges in the walks in $\I$, i.e., $w(\I)\coloneqq \sum_{I\in \I}w(I)=\sum_{I\in \I}\sum_{e\in I}w(e)$.
\end{definition}

According to the property of the demand, we define three well-known versions. If each customer's demand must be delivered in one tour, we call it \emph{unsplittable $k$-MCVRP}. If a customer's demand can be split into several tours, we call it \emph{splittable $k$-MCVRP}. If each customer's demand is a unit, we call it \emph{unit-demand $k$-MCVRP}.

In the following, we use CVRP to denote MCVRP with $m=1$, i.e., only one depot.
Unless otherwise specified, $k$-MCVRP satisfies the fixed-destination property. If it holds the non-fixed-destination property, we called it \emph{non-fixed $k$-MCVRP}.

\subsection{Assumptions}
Note that in our problem the demand $d(v)$ may be very large since the capacity $k$ may be arbitrarily larger than $n$. For the sake of analysis, we make several assumptions that can be guaranteed by some simple observations or polynomial-time reductions.

\begin{assumption}~\label{reduction1}
For splittable and unsplittable $k$-MCVRP, each customer's demand is at most $k$.
\end{assumption}

Assumption~\ref{reduction1} can be guaranteed by polynomial-time reductions. Clearly, it holds for unsplittable $k$-MCVRP; otherwise, the problem has no feasible solution. For the splittable case, Dror and Trudeau~\cite{dror1990split} proved that two tours in $k$-CVRP sharing two common customers can be modified into two better tours, which clearly also holds for $k$-MCVRP. A \emph{trivial tour} is a tour that contains only one depot and only one customer. Consider a customer $v\in V$ and an optimal solution with no two tours sharing two common customers. Since there are $m$ depots and each tour contains one depot, there are at most $m(n-1)$ non-trivial tours containing $v$ that deliver at most $m(n-1)(k-1)$ demand to $v$. So, if $d(v)>m(n-1)(k-1)$, there exists an optimal solution that contains at least $l\coloneqq\ceil{\frac{d(v)-m(n-1)(k-1)}{k}}$ trivial tours. Note that there are $m$ kinds of trivial tours for $v$, and we can directly assign $l$ of the cheapest one. After doing these for all customers, every customer has a demand of at most $m(n-1)(k-1)$. Then, for each customer $v\in V$ such that $d(v)>k$, we can replace it with $O(mn)$ customers with zero inter-distance and each has a demand of at most $k$. The new instance is equivalent to the old one, and its size increases from $O(m+n)$ to $O(mn^2)$.

\begin{assumption}~\label{reduction2}
For splittable $k$-MCVRP with fixed $k$, each customer's demand is a unit.
\end{assumption}

By Assumption~\ref{reduction1}, each customer's demand is at most $k$. Under the case that $k$ is fixed, we can further replace each customer $v$ with $d(v)$ unit-demand customers with zero inter-distance. The size of the instance increases at most $O(nk)$. So, Assumption~\ref{reduction2} can be guaranteed by polynomial-time reductions. It states that splittable $k$-MCVRP with fixed $k$ can be reduced to unit-demand $k$-MCVRP in polynomial time while maintaining the approximation ratio.

\begin{assumption}~\label{observe1}
For unsplittable, splittable, and unit-demand $k$-MCVRP, there exists an optimal solution where each tour delivers an integer amount of demand to each customer in the tour.
\end{assumption}

Assumption~\ref{observe1} is a simple observation since in our problem both customers' demand and vehicles' capacity are integers. Note that it can be proved by a similar exchanging argument in~\cite{zhaocvrp}. So, in the following, we may only consider a tour that delivers an integer amount of demand to each customer in the tour.

After Assumption~\ref{observe1}, we know that for unit-demand $k$-MCVRP, there is an optimal solution consisting of a set of cycles, which intersect only at the depots.

\section{Lower Bounds}
To connect approximation algorithms for $k$-MCVRP with $k$-CVRP, it is convenient to consider non-fixed $k$-MCVRP. The first reason is that non-fixed $k$-MCVRP is a relaxation of $k$-MCVRP, and then an optimal solution of the former one provides a lower bound for the latter. Let $\OPT$ (resp., $\OPT'$) denote the weight of an optimal solution for $k$-MCVRP (resp., $k$-CVRP), and then we have $\OPT'\leq \OPT$. The second is that non-fixed $k$-MCVRP is equivalent to $k$-CVRP. The reduction implicitly used in \cite{tight} is shown as follows.

Given $G=(V\cup D, E)$, we obtain an undirected complete graph $H=(V\cup\{o\},F)$ by replacing the depots in $D$ with a new single depot, denoted by $o$. There is a weight function $c: F\to \mathbb{R}_{\geq0}$ on the edges in $F$, and it holds that $c(o,v)=\min_{u\in D}w(u,v)$ and $c(v,v')=\min\{c(o,v)+c(o,v'), w(v,v')\}$ for all $v,v'\in V$. It is easy to verify that the weight function $c$ is also a semi-metric function. Note that $\Delta=\sum_{v\in V}d(v)c(o,v)$.
Clearly, any feasible solution of $k$-CVRP in $H$ corresponds to a feasible solution for non-fixed $k$-MCVRP in $G$ with the same weight.
We call an edge $vv'\in E$ satisfying $w(v,v')>c(o,v)+c(o,v')$ a ``dummy'' edge. Any tour using a dummy edge $vv'$ can be transformed into two tours with a smaller weight by replacing $vv'$ with two edges $uv$ and $u'v'$ incident to depots such that $c(o,v)=w(u,v)$ and $c(o,v')=w(u',v')$.
So, any feasible solution of non-fixed $k$-MCVRP in $G$ can also be modified into a feasible solution for $k$-CVRP in $H$ with a non-increasing weight.

A \emph{Hamiltonian cycle} in a graph is a simple cycle that contains all vertices in the graph exactly once.
Let $C^*$ be a minimum cost Hamiltonian cycle in graph $H$.

We mention three lower bounds for $k$-CVRP, which also works for $k$-MCVRP.

\begin{lemma}[\cite{HaimovichK85}]\label{lb-tsp}
It holds that $\OPT\geq\OPT'\geq c(C^{*})$.
\end{lemma}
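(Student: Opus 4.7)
The statement to prove is Lemma~\ref{lb-tsp}, namely the two inequalities $\OPT\geq \OPT'$ and $\OPT'\geq c(C^*)$. My plan is to handle them separately using two very standard ideas that are already set up in the preceding paragraphs of the paper.

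For the first inequality, I would simply invoke the reduction already spelled out between non-fixed $k$-MCVRP on $G$ and $k$-CVRP on $H$. Any feasible solution $\I$ for $k$-MCVRP on $G$ is also feasible for the non-fixed version (same set of itineraries, no additional constraint is imposed), so the non-fixed optimum is at most $\OPT$. The paper has already argued in the paragraph above the lemma that non-fixed $k$-MCVRP on $G$ and $k$-CVRP on $H$ are equivalent: a feasible solution on $H$ lifts to a feasible solution on $G$ of equal weight by reading off each edge $c(o,v)$ as the cheapest depot-to-customer edge it represents, and conversely any non-fixed solution on $G$ can be rewritten on $H$ with non-increasing cost by replacing every ``dummy'' edge. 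Hence $\OPT'\leq (\text{non-fixed opt})\leq \OPT$.

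For the second inequality, I would use the classical tour-concatenation and shortcutting argument. Let $\I'=\{T_1,\ldots,T_l\}$ be an optimal feasible solution for $k$-CVRP on $H$, so each $T_i$ is a closed walk through the unique depot $o$, and collectively the $T_i$ visit every customer in $V$ at least once. Concatenating $T_1,T_2,\ldots,T_l$ at $o$ produces a single closed walk $W$ starting and ending at $o$ with $c(W)=\sum_{i=1}^l c(T_i)=c(\I')=\OPT'$. Since $W$ visits every vertex of $V\cup\{o\}$ (possibly with multiplicity), we can shortcut $W$: scan $W$ in order, keep only the first occurrence of each vertex, and close the resulting sequence into a cycle $C$ on $V\cup\{o\}$. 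Each shortcut replaces a subwalk between two vertices by the direct edge between them, which by the triangle inequality of $c$ does not increase the cost. Thus $c(C)\leq c(W)=\OPT'$. Since $C$ is a Hamiltonian cycle of $H$ and $C^*$ is a minimum-cost one, $c(C^*)\leq c(C)\leq \OPT'$.

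Neither direction hides any real obstacle: the first is just ``relaxation plus the equivalence already proved,'' and the second is the standard TSP-lower-bound shortcutting for CVRP. The only mildly delicate point is to be explicit that the concatenation uses the common depot $o$, and that shortcutting is valid because $c$ is semi-metric on $V\cup\{o\}$ (which the paper has already verified when defining $c$). With these observations chained together, the two inequalities follow immediately and the lemma is established.
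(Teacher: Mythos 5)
Your proposal is correct and takes essentially the same approach as the paper: the second inequality is the standard argument of turning an optimal CVRP solution into a single closed walk through $o$ and then shortcutting to a Hamiltonian cycle (the paper phrases this via an Euler tour of the union of tours, which all have even degree; your concatenation at $o$ is the same object), and the first inequality follows from the reduction to non-fixed $k$-MCVRP established in the preceding text, which the paper treats as already given.
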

\begin{proof}
In an optimal solution of $k$-CVRP, it is easy to see that each vertex has an even degree. We can obtain an Euler tour and get a Hamiltonian cycle in $H$ by shortcutting. The Hamiltonian cycle has at least the cost of $C^*$. So, the lemma holds.
\end{proof}

\begin{lemma}[\cite{HaimovichK85}]\label{lb-delta}
It holds that $\OPT\geq\OPT'\geq(2/k)\Delta$.
\end{lemma}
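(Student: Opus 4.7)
The plan is to handle the two inequalities separately. The first, $\OPT \geq \OPT'$, is already in hand: the text observed that non-fixed $k$-MCVRP is a relaxation of $k$-MCVRP and is equivalent (via the reduction to the graph $H$ with the super-depot $o$ and weights $c$) to $k$-CVRP, so an optimal $k$-CVRP solution on $H$ costs no more than an optimal (fixed-destination) $k$-MCVRP solution on $G$. So the real content is the second inequality $\OPT' \geq (2/k)\Delta$, which I would prove by a tour-by-tour amortized charging argument.

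Fix an optimal $k$-CVRP solution on $H$ and, by Assumption~\ref{observe1}, assume that each tour $T$ delivers an integer amount $d_T(v)\in\mathbb{Z}_{\geq 0}$ to each customer $v\in T$, with $\sum_{v\in T} d_T(v)\leq k$ and $\sum_{T\ni v} d_T(v)=d(v)$. The key geometric observation is a round-trip lower bound: for any customer $v\in T$, the tour $T$ visits both $o$ and $v$, so by the triangle inequality applied to the walk $T$ split at $v$, we have $w(T)\geq 2\,c(o,v)$.

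I then multiply this inequality by $d_T(v)$ and sum over $v\in T$:
\[
w(T)\sum_{v\in T} d_T(v) \;\geq\; 2\sum_{v\in T} d_T(v)\,c(o,v).
\]
Using the capacity bound $\sum_{v\in T} d_T(v)\leq k$ on the left gives $k\cdot w(T)\geq 2\sum_{v\in T} d_T(v)\,c(o,v)$. Summing over all tours $T$ in the optimal solution and swapping the order of summation yields
\[
k\cdot \OPT' \;\geq\; 2\sum_{v\in V} c(o,v)\sum_{T\ni v} d_T(v) \;=\; 2\sum_{v\in V} d(v)\,c(o,v) \;=\; 2\Delta,
\]
which is the desired bound.

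There is no substantive obstacle here; the only point that requires care is the direction of the inequality when multiplying by $\sum_{v\in T} d_T(v)$, which is upper-bounded (not lower-bounded) by $k$. This is precisely why one first obtains the bound per tour in terms of $k\cdot w(T)$ and only then sums, rather than trying to charge $c(o,v)$ per unit demand directly.
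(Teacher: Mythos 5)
Your proof is correct and takes essentially the same approach as the paper: the round-trip bound $c(C)\geq 2c(o,v)$ for each $v$ on a tour, then the capacity bound to charge each tour's cost against the demand-weighted radial distances, then summing over tours. The only (minor) difference is that you are more careful about the splittable case by introducing the per-tour delivered amounts $d_T(v)$ and using $\sum_{T\ni v}d_T(v)=d(v)$, whereas the paper writes the charging directly in terms of $d(v)$ and the phrase ``each vertex $v\in V$ must be contained in a tour,'' which tacitly assumes each customer appears on a single tour; your version is the cleaner way to state it when demands may be split.
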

\begin{proof}
Take a tour $C=ov_1\dots v_io$ in an optimal solution of $k$-CVRP. By the triangle inequality, we have $c(C)\geq 2c(o,v)$ for each $v\in C$. Since each tour has a capacity of at most $k$, we have $c(C)\geq(1/k)\sum_{v\in C}d(v)c(C)\geq (2/k)\sum_{v\in C}d(v)c(o,v)$. Each vertex $v\in V$ must be contained in a tour. So, $\OPT'\geq(2/k)\sum_{v\in V}d(v)c(o,v)=(2/k)\Delta$.
\end{proof}

Let $T^*$ denote an optimal spanning tree in $H$. Clearly, its cost is a lower bound of an optimal Hamiltonian cycle in $H$. By Lemma~\ref{lb-tsp}, we have the following lemma.

\begin{lemma}\label{lb-tree}
 It holds that $\OPT\geq\OPT'\geq c(T^*)$.
\end{lemma}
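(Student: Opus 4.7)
The plan is simply to chain the desired bound onto the already-established Hamiltonian-cycle bound from Lemma~\ref{lb-tsp}. That lemma gives $\OPT \geq \OPT' \geq c(C^*)$, so it suffices to verify the purely graph-theoretic inequality $c(C^*) \geq c(T^*)$, i.e., that the cost of a minimum Hamiltonian cycle in $H$ dominates the cost of a minimum spanning tree in $H$.

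To establish $c(C^*)\geq c(T^*)$, I would take the Hamiltonian cycle $C^*$ on $V\cup\{o\}$ and delete any single edge. The resulting subgraph is a Hamiltonian path on $V\cup\{o\}$, which in particular is a spanning tree of $H$ (a connected acyclic subgraph touching every vertex, with exactly $|V|$ edges on $|V|+1$ vertices). Since $T^*$ is by definition a minimum-cost spanning tree, the weight of this path is at least $c(T^*)$. Edge weights are non-negative (the weight function $c$ inherited from the semi-metric $w$ maps into $\mathbb{R}_{\geq 0}$), so removing an edge only decreases total cost, yielding $c(C^*)\geq c(T^*)$. Combining with Lemma~\ref{lb-tsp} gives $\OPT\geq\OPT'\geq c(C^*)\geq c(T^*)$, as desired.

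There is essentially no obstacle here; the argument is a one-line observation that every Hamiltonian cycle contains a spanning tree as a subgraph. The only point worth flagging is the use of non-negativity of edge weights, which is needed to pass from the path (obtained by deleting an edge from $C^*$) to an upper bound on $c(C^*)$ itself. Once that is noted, the lemma follows immediately from Lemma~\ref{lb-tsp} and the definition of $T^*$ as a minimum spanning tree, so the proof can be written in two or three sentences.
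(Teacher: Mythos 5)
Your proof is correct and follows the same route as the paper: the paper simply observes (without spelling it out) that the minimum spanning tree cost is a lower bound on the optimal Hamiltonian cycle cost and then invokes Lemma~\ref{lb-tsp}; you have merely filled in the standard one-edge-deletion argument for that observation.
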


\section{Review of Previous Algorithms}
In this section, we review the cycle-partition algorithm in~\cite{tight} and the tree-partition algorithm in~\cite{HarksKM13} for $k$-MCVRP.

\subsection{The Cycle-Partition Algorithm}
The main idea of the cycle-partition algorithm~\cite{tight} is to construct a solution for non-fixed $k$-MCVRP based on the ITP or UITP algorithm for $k$-CVRP, and then modify the solution into a solution for $k$-MCVRP.

\textbf{The ITP and UITP algorithms.}
For splittable and unit-demand $k$-CVRP, the ITP algorithm operates by segmenting a given Hamiltonian cycle $C$ in graph $H$ into segments, each accommodating at most $k$ of demand. It then assigns two edges between the depot and the endpoints of each segment. The resulting solution has a weight of at most $(2/k)\Delta+c(C)$~\cite{HaimovichK85,altinkemer1990heuristics}.
For unsplittable $k$-CVRP, the UITP algorithm leverages the ITP algorithm with a capacity of $k/2$ to generate a solution for splittable and unit-demand $k$-CVRP. This solution is subsequently adjusted into a feasible solution for unsplittable $k$-CVRP. Altinkemer and Gavish \cite{altinkemer1987heuristics} proved that this modification incurs no additional cost.



\begin{lemma}[\cite{HaimovichK85,altinkemer1990heuristics,altinkemer1987heuristics}]~\label{ITP}
Given a Hamiltonian cycle $C$ in $H$, for splittable and unit-demand $k$-CVRP, the ITP algorithm can use polynomial time to output a solution of cost at most $(2/k)\Delta+c(C)$; for unsplittable $k$-CVRP, the upper bound is $(4/k)\Delta+c(C)$.
\end{lemma}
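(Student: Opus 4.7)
The plan is to present the ITP construction together with a standard shift-averaging argument, then derive the unsplittable bound by running ITP at half capacity and applying the pairing trick of Altinkemer and Gavish. Since the inequality is structural, most of the work is in choosing the cuts so that each customer's contribution to the depot-edge cost can be controlled on average.

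\emph{Construction.} I would view the Hamiltonian cycle $C$ as the depot $o$ followed by a Hamiltonian path $v_1 v_2 \cdots v_n$ on the customers. For each shift $i \in \{0,1,\ldots,k-1\}$, walk along this path and place a cut every time the accumulated demand hits $i + jk$ for $j = 1, 2, \ldots$; if a cut falls strictly inside customer $v_\ell$'s demand slice, split $v_\ell$'s demand across the two neighboring segments (splittable/unit-demand case). Each resulting segment carries total demand at most $k$; turn it into a tour by attaching both endpoints back to $o$. The ITP algorithm then returns the cheapest of these $k$ candidate solutions.

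\emph{Analysis.} Each candidate costs at most $c(C)$ for the internal segment edges (these edges lie on $C$), plus the cost of the depot edges attached at segment endpoints. To bound the latter, sum over all $k$ shifts: for each customer $v_\ell$, a cut lands inside its slice in at most $d(v_\ell)$ of the $k$ shifts (the cut positions modulo $k$ all lie on a single residue, so at most $d(v_\ell)$ of them fall in its slice of length $d(v_\ell)$). Each such cut makes $v_\ell$ an endpoint of two tours, contributing $2\,c(o, v_\ell)$ to that shift's depot-edge cost. Summing,
\[
\sum_{i=0}^{k-1} (\text{depot-edge cost of shift } i) \;\leq\; 2 \sum_{\ell=1}^{n} d(v_\ell)\, c(o, v_\ell) \;=\; 2\Delta,
\]
plus boundary contributions at the ends of the path that are absorbed by the two $o$-incident edges of $C$ itself (by treating the path cyclically). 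The cheapest shift therefore has depot-edge cost at most $(2/k)\Delta$, giving total cost at most $c(C) + (2/k)\Delta$, as required.

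\emph{Unsplittable extension.} I would rerun the ITP above with capacity $k/2$, yielding a splittable solution of cost at most $c(C) + (4/k)\Delta$ in which every tour delivers total demand at most $k/2$. The Altinkemer--Gavish pairing then repairs the split customers: pair up small tours that share a split customer and reassign that customer's demand entirely to one of the two tours in its pair. Paired tours together deliver at most $k$ demand, so the merged tours are feasible for unsplittable $k$-CVRP, and the triangle inequality guarantees the reassignment does not increase the total cost, yielding the bound $c(C) + (4/k)\Delta$. The main obstacle I expect is the boundary accounting in the shift-averaging step: the extreme endpoints $v_1$ and $v_n$ are depot-endpoints in every shift, and the cut rule must be calibrated so that their contribution is paid for by the two depot-incident edges of the given Hamiltonian cycle, rather than added on top of the $(2/k)\Delta$ average.
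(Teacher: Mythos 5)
This lemma is cited in the paper (from Haimovich--Rinnooy Kan and Altinkemer--Gavish) and is not reproved there, so there is no in-paper proof to compare against; your task is to reconstruct the standard argument.

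Your treatment of the splittable/unit-demand case is correct and is the standard shift-averaging argument. In particular, your suspicion about the boundary terms is well placed but resolves in your favor: if you write out the double count carefully, you find $\sum_{i=1}^{k}(\text{depot-edge cost of shift }i)=2\Delta+(k-1)\bigl(c(o,v_1)+c(o,v_n)\bigr)$, while the internal-segment cost for every shift is bounded by the cost of the Hamiltonian \emph{path} $v_1\cdots v_n$, i.e.\ by $c(C)-c(o,v_1)-c(o,v_n)$; the best shift then costs at most $(2/k)\Delta + c(C) - \tfrac{1}{k}\bigl(c(o,v_1)+c(o,v_n)\bigr) \le (2/k)\Delta+c(C)$. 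So the endpoints are indeed absorbed by the two $o$-incident edges of $C$.

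The unsplittable extension is where the sketch is not yet a proof. ``Pair up tours that share a split customer and reassign within the pair'' leaves two issues unaddressed. First, if you pair $T_1$ with $T_2$, $T_3$ with $T_4$, etc., a customer split between $T_{2j}$ and $T_{2j+1}$ straddles two different pairs and is repaired by neither. Second, under Assumption~\ref{reduction1} a customer may have $d(v)$ up to $k$, and with per-tour capacity $k/2$ such a customer can be split across \emph{three} consecutive tours; a ``pair'' cannot absorb it, and reassigning it wholesale to an end tour can push that tour above capacity. The Altinkemer--Gavish repair therefore has to be stated more carefully than a simple pairing/merging (e.g.\ a parity-consistent reassignment rule, or cutting at customer boundaries so that segments carry at most $k$ rather than $k/2$ demand), and it formally requires $k$ even. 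The conclusion $(4/k)\Delta + c(C)$ is of course correct --- it is a classical theorem --- but as written your repair step does not establish feasibility in these cases.
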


\textbf{The cycle-partition algorithm.}
In $k$-MCVRP, the cycle-partition algorithm involves employing the ITP or UITP algorithm to derive a feasible solution for non-fixed $k$-MCVRP. This solution is then adjusted into a feasible solution for $k$-MCVRP, incurring some additional cost. Li and Simchi-Levi \cite{tight} proved that this additional cost is bounded by the cost of the Hamiltonian cycle utilized.


\begin{lemma}[\cite{tight}]~\label{cycle-partition}
Given a Hamiltonian cycle $C$ in $H$, for splittable and unit-demand $k$-MCVRP, there is a polynomial-time algorithm to output a solution of cost at most $(2/k)\Delta+2c(C)$; for unsplittable $k$-MCVRP, the upper bound is $(4/k)\Delta+2c(C)$.
\end{lemma}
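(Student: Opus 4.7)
The plan is to apply Lemma~\ref{ITP} to the Hamiltonian cycle $C$ in the derived graph $H$ to obtain a non-fixed-destination solution, and then convert each tour at the super-depot $o$ into a closed tour at a single real depot in $G$ while paying a bounded additional cost. Throughout, let $u_v \coloneqq \arg\min_{u \in D} w(u, v)$ denote the nearest depot to each customer $v$, so that $c(o, v) = w(u_v, v)$.

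First, applying ITP (in the splittable and unit-demand case) or UITP (in the unsplittable case) to $C$ in $H$ yields a set of tours $\{T_1, \ldots, T_s\}$ with $T_j = o\, v^j_1 v^j_2 \cdots v^j_{l_j}\, o$ of total cost at most $(2/k)\Delta + c(C)$ or $(4/k)\Delta + c(C)$, respectively. Following the observation on dummy edges recalled earlier in the paper, I may assume that no tour uses a dummy edge, since otherwise such a tour can be split at its dummy edge without increasing the cost. Consequently, $c(v^j_i, v^j_{i+1}) = w(v^j_i, v^j_{i+1})$ for every consecutive customer pair within a tour, and the walk $u_{v^j_1}\, v^j_1 \cdots v^j_{l_j}\, u_{v^j_{l_j}}$ in $G$ has exactly the same cost as $T_j$ in $H$.

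For each tour $T_j$ I will build a fixed-destination tour in $G$ by choosing either $u_{v^j_1}$ or $u_{v^j_{l_j}}$ as the common start and end depot. Closing at $u_{v^j_1}$ gives a tour of cost $c(o, v^j_1) + p(T_j) + w(v^j_{l_j}, u_{v^j_1})$, where $p(T_j) \coloneqq \sum_{i=1}^{l_j - 1} w(v^j_i, v^j_{i+1})$ is the customer-path cost of $T_j$. Iterating the triangle inequality along this path yields $w(v^j_{l_j}, u_{v^j_1}) \leq p(T_j) + c(o, v^j_1)$, so the extra cost over $c(T_j)$ is at most $p(T_j) + c(o, v^j_1) - c(o, v^j_{l_j})$; the symmetric bound holds when closing at $u_{v^j_{l_j}}$. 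Taking the minimum of the two candidates, the per-tour extra cost is at most $p(T_j)$.

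Finally, the customer-paths of $T_1, \ldots, T_s$ are edge-disjoint sub-paths of $C$ made entirely of non-dummy edges, so $\sum_{j} p(T_j) \leq c(C)$; hence the total extra cost of the conversion is at most $c(C)$. Adding this to the ITP or UITP bound gives the two claimed inequalities. I expect the main technical subtlety to be handling the dummy-edge splitting carefully before invoking the per-tour triangle-inequality argument (so that $p(T_j)$ truly corresponds to non-dummy edges of $C$); once that setup is complete, the remainder is a routine calculation.
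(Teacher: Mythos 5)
Your proof is correct and uses the same technique that the paper itself deploys in Appendix~\ref{A.2} for the analogous LP-based cycle-partition conversion: run ITP/UITP on $H$, decompose the resulting walks in $G$ into depot-to-depot pieces (your dummy-edge splitting is equivalent to the paper's ``minimal walks'' step), and for each piece close the tour at whichever endpoint depot is cheaper, paying at most the cost of the piece's customer segment by the triangle inequality. Since the customer segments are edge-disjoint pieces of $C$, the total overhead is at most $c(C)$, which is precisely the statement Li and Simchi-Levi proved.
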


Using the $3/2$-approximate Hamiltonian cycle~\cite{christofides1976worst,serdyukov1978some}, by Lemmas~\ref{lb-tsp} and \ref{lb-delta}, the cycle-partition algorithm achieves a 4-approximation ratio for splittable and unit-demand $k$-MCVRP, and a 5-approximation ratio for unsplittable $k$-MCVRP.

\subsection{The Tree-Partition Algorithm}
The tree-partition algorithm is based on an optimal spanning tree in graph $H$ that corresponds to an optimal constrained spanning forest in $G$. This algorithm aims to split the corresponding constrained spanning forest into edge-disjoint components ensuring that each component has a demand of at most $k$. Additionally, any component lacking depots must have a demand of at least $k/2$.
Components containing one depot can be converted into tours by doubling all edges within them and then shortcutting.
For components lacking depots, the algorithm adds a single edge connecting them to the nearest depot with minimized weight. These components are subsequently transformed into tours using the same doubling and shortcutting method.
Note that this algorithm is applicable to unsplittable, splittable, and unit-demand $k$-MCVRP simultaneously.

\begin{lemma}[\cite{HarksKM13}]~\label{tree-ITP}
For unsplittable, splittable, and unit-demand  $k$-MCVRP, there is a polynomial-time algorithm to output a solution of cost at most $(4/k)\Delta+2c(T^*)$.
\end{lemma}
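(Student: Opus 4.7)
The plan is to implement the three-step tree-partition scheme described before the lemma statement: compute an optimal constrained spanning forest, partition it into components whose demands lie in a good window, and convert each component into a tour by doubling-and-shortcutting. The $2c(T^{*})$ term will come from doubling the forest, and the $(4/k)\Delta$ term from attaching each depot-free component to its nearest depot.

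First I would lift $T^{*}$ back from the collapsed graph $H$ to a constrained spanning forest $F^{*}$ in $G$: each edge $ov$ of $T^{*}$ incident to the super-depot $o$ is replaced by the edge $uv$ with $u \in D$ realizing $c(o,v) = w(u,v)$, and every other edge is kept as is. By the definition of $c$, this replacement does not increase the weight, so $w(F^{*}) \leq c(T^{*})$; since each $u \in D$ is distinct, every tree $T_i$ of $F^{*}$ contains exactly one depot, which we take as its root. Next I would partition each $T_i$ into connected components such that every component has demand at most $k$ and every depot-free component has demand at least $k/2$. This can be done by a post-order traversal that greedily peels off any child-subtree whose residual demand has just entered $[k/2,k]$, and whenever merging a processed child-subtree into the active bag at a vertex $v$ would push $d(\text{bag}) > k$, peels off that child-subtree first; Assumption~\ref{reduction1} (no single customer has demand exceeding $k$) ensures the procedure can always make progress.

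For each depot-containing component $S$ I would double all edges and shortcut the resulting Eulerian walk to obtain a tour of weight at most $2w(S)$. For each depot-free component $S$, I would first attach the cheapest available connecting edge of weight $\rho_S := \min_{v \in S,\, u \in D} w(u,v)$ and then double-and-shortcut, producing a tour of weight at most $2w(S) + 2\rho_S$. Since each customer lies in exactly one component whose total demand is at most $k$, this set of tours is simultaneously feasible for the unit-demand, splittable, and unsplittable versions (no customer is ever served by two tours). The total cost is therefore at most $2w(F^{*}) + 2\sum_S \rho_S \leq 2c(T^{*}) + 2\sum_S \rho_S$, where the sum ranges over depot-free components. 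For the charging, fix a depot-free component $S$; for every $v \in S$ we have $\rho_S \leq \min_{u \in D} w(u,v)$, so weighting by $d(v)$ and summing over $v \in S$ gives $\rho_S \cdot d(S) \leq \sum_{v \in S} d(v) \min_{u \in D} w(u,v)$. Using $d(S) \geq k/2$ and summing over the disjoint depot-free components,
\[
\sum_{S} \rho_S \;\leq\; \frac{2}{k}\sum_{v \in V} d(v)\min_{u \in D} w(u,v) \;=\; \frac{2}{k}\,\Delta,
\]
which, combined with the forest bound, yields the claimed $(4/k)\Delta + 2c(T^{*})$.

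The only delicate step is the partition: merely ensuring $d(S) \leq k$ is routine, but simultaneously guaranteeing $d(S) \geq k/2$ on every depot-free component—which is essential because the charging argument divides by $d(S)$—requires the careful peeling rule above, and this is precisely where the ``fixable flaw'' in the algorithm of Harks~\emph{et al.}\ mentioned in the introduction (and Appendix~\ref{A.1}) is located. Once the partition is in hand, the tour construction and the charging are immediate, and polynomial runtime is clear since both the MST computation and the post-order partition take polynomial time.
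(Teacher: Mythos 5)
Your high-level architecture (lift $T^{*}$ to a constrained spanning forest, partition into connected subtrees with demand $\leq k$ and depot-free components with demand $\geq k/2$, double-and-shortcut, charge the connecting edges against $\Delta$) is the right one, and your charging computation is correct \emph{given} the partition invariant. The gap is the partition itself: the invariant you need ($d(S)\leq k$ on every component and $d(S)\geq k/2$ on every depot-free component) is simply not achievable by connected-subtree partitioning in general, and Assumption~\ref{reduction1} is not enough to rescue it. The paper's Appendix~\ref{A.1} gives an explicit counterexample of this form: $T_u = u\,v\,v'$ a path with $k=5$, $d(v)=4$, $d(v')=2$. Here $d(T_u)=6>k$, the only connected depot-free pieces containing $v'$ are $\{v'\}$ (demand $2<k/2$) and $\{v,v'\}$ (demand $6>k$), and including $v$ in the depot component forces $\{v'\}$ to stand alone. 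Your ``peel off that child-subtree first'' rule, applied at $v$, produces exactly this disallowed depot-free component of demand $2$; the alternative reading (peel off the active bag $\{v\}$) disconnects $v'$ from the root and again strands a small component. In either case the charging step that divides by $d(S)\geq k/2$ breaks, so the argument does not close.

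What is actually needed is one of the two devices the paper uses. Algorithm~\ref{algo:refined-tree-partition} preprocesses every customer $v$ with $d(v)>\floor{k/2}$ by issuing it a trivial tour to its nearest depot \emph{before} computing the spanning tree on the remaining vertices; afterwards every tree vertex has demand $\leq\floor{k/2}$ and a greedy post-order peel really does maintain the invariant (this is precisely why the $\floor{k/2}$ threshold is used there). Alternatively, the fix of Appendix~\ref{A.1} keeps the heavy pivot $v$ inside the forest but, when the small leftover bag $\T_1$ has $d(\T_1)\leq k/2$ yet $d(v)+d(\T_1)>k$, issues \emph{two} tours (a trivial tour for $v$ and a separate tour for $\T_1$) and proves via Lemma~\ref{new} that their combined cost is at most $\frac{4}{k}\sum_{v'\in S}d(v')c(o,v')+2c(S)$; that bound relies on $d(v)+d(\T_1)>k$ (not merely $\geq k/2$) together with the elementary inequality $(x+1)(y+1)\leq 2(xy+1)$ for $x,y\geq 1$, and does not follow from your uniform $d(S)\geq k/2$ charging. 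You correctly flag the partition as ``the delicate step'' and as the location of the flaw, but the peeling rule you propose does not repair it; you would need to import one of the two fixes above and rerun the accounting on the exceptional components.
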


By Lemmas~\ref{lb-delta} and \ref{lb-tree}, the tree-partition algorithm achieves a 4-approximation ratio for unsplittable, splittable, and unit-demand $k$-MCVRP.

\section{An Improvement for Splittable MCVRP}
In this section, we first propose a refined tree-partition algorithm based on the idea in~\cite{HarksKM13}. Our algorithm is simpler due to the previous assumptions. Moreover, our algorithm has a better approximation ratio for the case that $k$ is fixed. Then, based on recent progress in approximating $k$-CVRP~\cite{blauth2022improving}, we obtain an improved $(4-1/1500)$-approximation algorithm for splittable and unit-demand $k$-MCVRP.

\subsection{A Refined Tree-Partition Algorithm}
In our algorithm, we start by assigning a single cheapest trivial tour for each customer $v\in V$ with demand $d(v)>\floor{k/2}$ since each customer's demand is at most $k$ according to Assumption~\ref{reduction1}.
Let $V'$ denote the remaining customers. To satisfy customers in $V'$, we find an optimal spanning tree $T'^*$ in $H[V'\cup\{o\}]$. Note that $T'^*$ corresponds to a constrained spanning forest in $G[V'\cup D]$, denoted by $\F$.  We then consider a tree $T_u\in\F$ rooted at depot $u\in D$, and proceed to generate tours by splitting $T_u$, following the approach of Hark \emph{et al.} in~\cite{HarksKM13}.
For each vertex $v\in T_u$, we denote the sub-tree rooted at $v$ and its set of children by $T_v$ and $Q_v$ respectively, and let $d(T_v)=\sum_{v'\in T_v}d(v')$.
Then, we consider the following two cases.
\begin{itemize}
\item If $d(T_u)\leq k$, it can be directly transformed into a tour by doubling and shortcutting.
\item Otherwise, we repeatedly perform the following until $d(T_u)\leq k$.\\
(1) Find a customer $v\in T_u$ such that $d(T_v)>k$ and $d(T_{v'})\leq k$ for every children $v'\in Q_v$, and consider the sub-trees $\T_v\coloneqq\{T_{v'}\mid v'\in Q_v\}$.\\
(2) Greedily partition them into $l$ sets $\T_1,\dots,\T_l$ such that $\floor{k/2}<d(\T_i)\leq k$ for each $i\in \{2,\dots,l\}$.\\
(3) For each such set, saying $\T_2$, combine them into a component $S$ by adding $v$ with edges joining $v$ and each tree in $\T_2$. Note that $S$ is a sub-tree of $T_v$.\\
(4) Find an edge $e_S$ with minimized weight connecting one depot in $D$ to one vertex in $S$.\\
(5) Obtain a tour satisfying all customers in trees of $\T_2$ by doubling $e_S$ with the edges in $S$ and shortcutting (note that we also need to shortcut $v$).\\
(6) After handling $\T_i$ for each $i\in {2,\dots,l}$, only $\T_1$ remains. If $d(\T_1)> \floor{k/2}$, handle it similarly to $\T_i$ with $i>1$. Otherwise, $d(\T_1)\leq \floor{k/2}$.\\
The remaining tree is denoted by $T'_u$, and the set of $v$'s children in $T'_u$ by $Q'_v$. Note that $d(\T_1)+d(v)\leq k$ since $d(v)\leq \floor{k/2}$. Hence, in $T'_u$, the condition $d(T'_v)>k$ and $d(T'_{v'})\leq k$ for every child $v'\in Q'_v$ will no longer hold, ensuring that the algorithm will terminate in polynomial time.
\end{itemize}

The algorithm is formally shown in Algorithm~\ref{algo:refined-tree-partition}.

\begin{algorithm}[t]
\caption{A refined tree-partition algorithm for $k$-MCVRP}
\label{algo:refined-tree-partition}
\small
\vspace*{2mm}
\textbf{Input:} Two undirected complete graphs: $G=(V\cup D, E)$ and $H=(V\cup\{o\}, F)$.\\
\textbf{Output:} A solution for $k$-MCVRP.

\begin{algorithmic}[1]
\State For each customer $v\in V$ with $d(v)>\floor{k/2}$, assign a trivial tour from $v$ to its nearest depot.\label{tour1}
\State Find an optimal spanning tree $T'^*$ in graph $H[V'\cup\{o\}]$.
\State Obtain the constrained spanning forest $\F$ in $G[V'\cup D]$ with respect to $T'^*$.
\For{each tree $T_u\in \F$}\Comment{$T_u$ is rooted at the depot $u$}
\While{$d(T_u)>k$}\label{loop}
\State Find $v\in T_u$ such that $d(T_v)>k$ and $d(T_{v'})\leq k$ for each $v'\in Q_v$.\Comment{$Q_v$ is the children set of $v$} \label{startloop}
\State Greedily partition the trees in $\T_v\coloneqq\{T_{v'}\mid v'\in Q_v\}$ into $l$ sets $\T_1,\dots,\T_l$ such that $\floor{k/2}<d(\T_i)\leq k$ for each $i\in \{2,\dots,l\}$.
\State Initialize $\mbox{Index}\coloneqq\{2,\dots,l\}$.
\If{$d(\T_1)>\floor{k/2}$}\label{sT_1}
\State $\mbox{Index}\coloneqq\mbox{Index}\cup\{1\}$.
\EndIf \label{tT_1}
\For{$i\in\mbox{Index}$}
\State Combine the trees in $\T_i$ into a component $S$ by adding $v$ with edges joining $v$ and each tree in $\T_i$.\label{start-tour2}
\State Find an edge $e_S$ with minimized weight connecting one depot in $D$ to one vertex in $S$.
\State Obtain a tour satisfying all customers in the trees of $\T_i$ by doubling $e_S$ with the edges in $S$ and shortcutting. \label{tour2}
\State Update $T_u$ by removing the component $S$ (except for) $v$ from $T_u$.\label{clean}
\EndFor
\EndWhile
\State Obtain a tour satisfying all customers in $T_u$ by doubling and shortcutting.\label{tour3}
\EndFor
\end{algorithmic}
\end{algorithm}

\begin{theorem}\label{tree-partition}
For unsplittable, splittable, and unit-demand $k$-MCVRP, Algorithm~\ref{algo:refined-tree-partition} can use polynomial time to output a solution of cost at most $\frac{2}{\floor{k/2}+1}\Delta+2c(T'^*)$.
\end{theorem}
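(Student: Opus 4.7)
The plan is to decompose the output cost according to the three kinds of tours produced by Algorithm~\ref{algo:refined-tree-partition}: (i) the trivial tours created in Line~\ref{tour1}, one per customer $v$ with $d(v)>\floor{k/2}$; (ii) the loop tours created in Lines~\ref{start-tour2}--\ref{tour2}, one per set $\T_i$ processed during the while-loop; and (iii) the terminal tour created in Line~\ref{tour3}, obtained by doubling the remainder of each $T_u$. Before bounding cost I would verify feasibility for all three versions simultaneously: every trivial tour carries demand $d(v)\le k$ by Assumption~\ref{reduction1}; every loop tour carries $d(\T_i)\le k$ by the construction of the partition in Line~\ref{startloop}; and each terminal tour carries $d(T_u)\le k$ by the exit condition of the while-loop. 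Since every customer is served in exactly one tour, the output is also feasible in the unsplittable sense. Polynomial running time follows because each iteration of the while-loop strictly decreases $d(T_u)$ by at least $\floor{k/2}+1$.

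For the cost, each tour's weight splits naturally into two pieces: the doubled tree edges (either inside the component $S$ or inside the terminal $T_u$) and the doubled depot-to-customer edge (either $e_S$ for a loop tour, or the nearest-depot edge for a trivial tour). The first task is to bound the total contribution of doubled tree edges by $2c(T'^{*})$. In each iteration of the loop the edges of $S$ are precisely the internal edges of the trees in $\T_i$ together with the edges joining $v$ to their roots, and Line~\ref{clean} removes all of these from $T_u$ before the next iteration. Thus no edge of the forest $\F$ (equivalently of $T'^{*}$) is doubled in more than one tour, and shortcutting never increases weight by the triangle inequality.

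The second task is to bound the depot-to-customer contribution by $\tfrac{2}{\floor{k/2}+1}\Delta$. The key observation is that every tour the algorithm produces is associated with a demand block of size at least $\floor{k/2}+1$: for a trivial tour this is the condition $d(v)>\floor{k/2}$ in Line~\ref{tour1}, and for a loop tour this is guaranteed for $i\in\{2,\dots,l\}$ by the partition rule in Line~\ref{startloop} and for $i=1$ by the guard on Lines~\ref{sT_1}--\ref{tT_1}. Because $c(e_S)=\min_{v'\in S}c(o,v')\le c(o,v')$ for every $v'\in \T_i$, a weighted-average argument yields
\[
c(e_S)\;\le\;\frac{1}{d(\T_i)}\sum_{v'\in\T_i}d(v')c(o,v')\;\le\;\frac{1}{\floor{k/2}+1}\sum_{v'\in\T_i}d(v')c(o,v'),
\]
and analogously $c(o,v)\le d(v)c(o,v)/(\floor{k/2}+1)$ for a trivial tour. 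Since the customer sets served by different tours are disjoint subsets of $V$, summing and multiplying by $2$ bounds the total depot-edge contribution by $\tfrac{2}{\floor{k/2}+1}\sum_{v\in V}d(v)c(o,v)=\tfrac{2}{\floor{k/2}+1}\Delta$. Adding the two pieces proves the theorem.

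The point I expect to require the most care is the bookkeeping consistency in the sub-case $d(\T_1)\le\floor{k/2}$, where $\T_1$ and $v$ stay inside $T_u$ and are deferred to later processing. Here one must argue that the deferred edges are eventually doubled exactly once (either in a later loop iteration whose chosen $v'$ lies above the old $v$, or in the terminal tour), and that the customers in $\T_1$ contribute to the $\Delta$ term only through whichever block finally serves them, so no customer is double-counted. Once this is pinned down, the rest of the argument is a direct calculation.
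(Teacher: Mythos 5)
Your proposal is correct and follows essentially the same decomposition as the paper: trivial tours, loop tours, and terminal tours, with the doubled tree edges charged to $2c(T'^*)$ and the doubled depot edges bounded via the same weighted-average argument using the demand threshold $\floor{k/2}+1$. The only cosmetic difference is your polynomial-time argument (demand decrease per while-iteration) versus the paper's (at most one iteration per customer found in Step~\ref{startloop}), and the paper resolves the bookkeeping concern you flag at the end simply by observing that Step~\ref{clean} removes each processed component before the next iteration, so each edge of $\F$ and each customer is charged exactly once.
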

\begin{proof}
First, we show that Algorithm~\ref{algo:refined-tree-partition} takes polynomial time. It is easy to see that all steps of Algorithm~\ref{algo:refined-tree-partition} take polynomial time except for the loop in Step~\ref{startloop}. Assuming that in the first loop, it finds a customer $v$ satisfying the condition in Step~\ref{startloop}.
As previously mentioned, by the end of this loop, we will have $d(T_v)\leq k$. So, the number of iterations in Step~\ref{loop} is bounded by the number of customers in $T_u$. Therefore, Algorithm~\ref{algo:refined-tree-partition} takes polynomial time.

Next, we show that the algorithm generates a feasible solution. Clearly, each tour in Steps~\ref{tour1}, \ref{tour2}, and \ref{tour3} has a capacity of at most $k$. Moreover, after assigning a tour in Step~\ref{tour2}, the algorithm removes all customers in the tour in Step~\ref{clean}. So, each customer is satisfied using only one tour.

At last, we analyze the total cost of all tours. We consider the following three cases.

\textbf{Case~1: tours in Step~\ref{tour1}.}
Let $\overline{V'}=V\setminus V'$.
Since we assign a trivial tour to each customer $v\in \overline{V'}$, the total cost of these tours is $\sum_{v\in \overline{V'}}2c(o,v)$. Since $d(v)>\floor{k/2}$ for every $v\in \overline{V'}$, the total cost is
\[
\sum_{v\in \overline{V'}}2c(o,v)\leq \frac{2}{\floor{k/2}+1}\sum_{v\in \overline{V'}}d(v)c(o,v).
\]

\textbf{Case~2: tours in Step~\ref{tour2}.} For each such tour, there exists a set of trees $\T_i$, a component $S$, and an edge $e_S$ corresponding to it. Since the tour is obtained by doubling $e_S$ with the edges in $S$ and shortcutting, its cost is bounded by twice the weight of edges in $S$ along with the weight of $e_S$. For the weight of $e_S$, we can get
\begin{align*}
w(e_S)=\min_{v'\in S}c(o,v')\leq \min_{v'\in \T_i} c(o,v')&\leq \sum_{v'\in\T_i}\frac{d(v')}{\sum_{v'\in\T_i}d(v')}c(o,v')\\
&\leq\frac{1}{\floor{k/2}+1}\sum_{v'\in\T_i}d(v')c(o,v'),
\end{align*}
where the first equality follows from the definition of $e_S$, and the last inequality follows from $d(\T_i)=\sum_{v'\in\T_i}d(v')>\floor{k/2}$.
Moreover, since each customer is contained in only one tour, the total cost of these tours on $e_S$ is
\[
\sum_{\T_i}2w(e_S)\leq\frac{2}{\floor{k/2}+1}\sum_{\T_i}\sum_{v'\in\T_i}d(v')c(o,v')\leq\frac{2}{\floor{k/2}+1}\sum_{v'\in V'}c(o,v').
\]
Moreover, the cost of these tours on the edges of $S$ is $\sum_{S}2w(S)$.

\textbf{Case~3: tours in Step~\ref{tour3}.} The cost of these tours is exactly twice the weight of the left edges after moving all components in Step~\ref{tour2} from each tree of $\F$. So, the total cost of these tours is $2w(\F)-\sum_{S}2w(S)$.

Therefore, the total cost of all tours is at most
\[
\frac{2}{\floor{k/2}+1}\sum_{v\in V}d(v)c(o,v)+2c(T'^*)=\frac{2}{\floor{k/2}+1}\Delta+2c(T'^*).
\]
Hence, the theorem holds.
\end{proof}


\begin{lemma}\label{lb-tree+}
It holds that $c(C^*)\geq c(T'^*)$.
\end{lemma}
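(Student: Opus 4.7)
The plan is to use a standard shortcutting argument together with the fact that $V' \subseteq V$. Recall that $C^*$ is a minimum-weight Hamiltonian cycle on $V \cup \{o\}$, while $T'^*$ is a minimum-weight spanning tree on the strictly smaller vertex set $V' \cup \{o\}$, where $V' = \{v \in V : d(v) \leq \lfloor k/2 \rfloor\}$. Since $T'^*$ is optimal, it suffices to exhibit \emph{some} spanning tree of $H[V' \cup \{o\}]$ whose weight is at most $c(C^*)$.

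The construction proceeds in two steps. First, traverse $C^*$ and shortcut past every vertex in $V \setminus V'$: whenever the cycle contains a segment $x \, y_1 \, y_2 \cdots y_r \, z$ with $x, z \in V' \cup \{o\}$ and $y_1,\dots,y_r \in V \setminus V'$, replace this segment by the single edge $xz$. Because $c$ is a semi-metric on $V \cup \{o\}$ (verified in Section~3 when defining $c$), the triangle inequality gives $c(x,z) \leq c(x,y_1) + c(y_1,y_2) + \cdots + c(y_r,z)$, so the resulting Hamiltonian cycle $C'$ on $V' \cup \{o\}$ satisfies $c(C') \leq c(C^*)$. Second, delete any one edge of $C'$ to obtain a Hamiltonian path, which is in particular a spanning tree of $H[V' \cup \{o\}]$ of weight at most $c(C')$. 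Optimality of $T'^*$ then yields $c(T'^*) \leq c(C') \leq c(C^*)$.

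There is essentially no obstacle here: the only thing one has to be a little careful about is that $o$ and the endpoints retained in each shortcut step indeed lie in $V' \cup \{o\}$, which is automatic from the way the shortcutting is defined, and that the triangle inequality used is on the shortcut weight function $c$ (not on the original $w$), which was already established in the reduction from non-fixed $k$-MCVRP to $k$-CVRP. The resulting bound $c(T'^*) \leq c(C^*)$ is what is needed to combine Theorem~\ref{tree-partition} with Lemma~\ref{lb-tsp} in the subsequent analysis.
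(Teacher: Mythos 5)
Your proof is correct and takes essentially the same route as the paper: shortcut $C^*$ past the vertices of $V\setminus V'$ (valid because $c$ is a semi-metric) to get a Hamiltonian cycle on $V'\cup\{o\}$, then compare to the minimum spanning tree. The paper phrases the final step by invoking Lemma~\ref{lb-tree} for the optimal Hamiltonian cycle of $H[V'\cup\{o\}]$, while you make the same argument explicit by deleting an edge to produce a spanning tree; these are the same idea.
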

\begin{proof}
Let $C'^*$ denote an optimal Hamiltonian cycle in graph $H[V'\cup\{o\}]$. By the proof of Lemma~\ref{lb-tree}, we have $c(C'^*)\geq c(T'^*)$. Since we can obtain a Hamiltonian cycle in $H[V'\cup\{o\}]$ by shortcutting the optimal Hamiltonian cycle $C^*$ in $H$, by the triangle inequality, we have $c(C^*)\geq c(C'^*)$.
\end{proof}

By Theorem~\ref{tree-partition}, and Lemmas~\ref{lb-tsp} and~\ref{lb-tree+}, the refined tree-partition algorithm has an approximation ratio of $\frac{k}{\floor{k/2}+1}+2=4-\Theta(1/k)$, which is better than 4 for the case that $k$ is fixed. 
Next, we consider the improvement for general $k$.

\subsection{The Improvement}
Blauth~\emph{et al.}~\cite{blauth2022improving} made significant progress in approximating $k$-CVRP. We show that their method can be  extended to $k$-MCVRP to obtain a $(4-1/1500)$-approximation algorithm for splittable and unit-demand $k$-MCVRP.

Recall that $\OPT'$ is the cost of an optimal solution of $k$-CVRP in graph $H$.
Fixing a constant $\varepsilon>0$, their main result demonstrates that if $(1-\varepsilon)\cdot\OPT' < (2/k)\Delta$, one can find a Hamiltonian cycle $C$ in polynomial time such that $\lim_{\varepsilon \rightarrow 0} c(C) = c(C^*)$.

\begin{lemma}[\cite{blauth2022improving}]\label{good}
If $(1-\varepsilon)\cdot\OPT'<(2/k)\Delta$, there is a function $f: \mathbb{R}_{>0}\rightarrow\mathbb{R}_{>0}$ with $\lim_{\substack{\epsilon\rightarrow 0}}f(\epsilon)=0$ and a polynomial-time algorithm to get a Hamiltonian cycle $C$ in $H$ with $c(C)\leq (1+f(\epsilon)) \cdot\OPT'$.
\end{lemma}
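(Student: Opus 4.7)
The plan is to build the Hamiltonian cycle $C$ out of a near-optimal $k$-CVRP solution in $H$, leveraging the hypothesis $(1-\varepsilon)\cdot\OPT'<(2/k)\Delta$ to guarantee that the construction loses only $o(1)$ compared with $\OPT'$.

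First, I would consider an optimal integral CVRP solution $\I^*=\{T_1,\dots,T_q\}$, viewed as a multigraph $M$ on $V\cup\{o\}$ in which $o$ has degree $2q$ and every customer has degree $2$. By the capacity argument (Lemma~\ref{lb-delta}), $c(\I^*)=\OPT'$ already accounts for the radial weight $(2/k)\Delta$, and combined with the hypothesis the ``non-radial slack'' $\OPT'-(2/k)\Delta$ is at most $\varepsilon\cdot\OPT'/(1-\varepsilon)$, which should control the cost of any repair we perform at $o$.

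Second, I would turn $M$ into a single Hamiltonian cycle by shortcutting $q-1$ of the $q$ pairs of edges meeting $o$, keeping exactly one pair incident to $o$. Choosing the pairing so that the auxiliary graph on $\{T_1,\dots,T_q\}$ (where two tours are adjacent if a shortcut joins them) is a single cycle, the resulting subgraph is connected and $2$-regular on $V\cup\{o\}$, hence a Hamiltonian cycle. By the triangle inequality, each shortcut $c(a,b)\leq c(o,a)+c(o,b)$ does not increase cost, so $c(C)\leq c(\I^*)=\OPT'$. If $\I^*$ were available, this would already suffice (with $f\equiv 0$).

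The main obstacle is that $\I^*$ cannot be computed in polynomial time, since $k$-CVRP is APX-hard, and any constant-factor surrogate for $\I^*$ would yield only a constant-factor Hamiltonian cycle rather than a $(1+f(\varepsilon))$-one. To make the construction constructive, I would replace $\I^*$ by a tractable surrogate: solve a Held-Karp-style LP relaxation of $k$-CVRP on $H$, obtaining a fractional solution $x^*$ of value $\mathrm{LP}^*\leq\OPT'$. Under the capacity-tight hypothesis, I would argue that $x^*$ has a near-integral ``radial'' structure, and that a Christofides-type rounding of its TSP marginal produces a Hamiltonian cycle of cost $\mathrm{LP}^*+f(\varepsilon)\cdot\OPT'$. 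The delicate step is the parity correction: Christofides' generic $c(\cdot)/2$ bound on the $T$-join is far too weak for the claim $f(\varepsilon)\to 0$, and one must exploit the capacity-tight structure to show that the $T$-join is actually cheap. This is exactly the technical heart of~\cite{blauth2022improving}, whose randomized redistribution argument (shifting fractional demand along cycles of $x^*$ and averaging over the randomness) I would follow to extract the explicit function $f$ with $\lim_{\varepsilon\to 0}f(\varepsilon)=0$.
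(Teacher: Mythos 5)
The statement you are asked to prove is imported wholesale from Blauth \emph{et al.}~\cite{blauth2022improving}; the paper cites it rather than re-proving it, and the only piece it reproduces locally is the explicit formula for $f$ in Appendix~\ref{A.4}. Your sketch is in the right spirit at the coarsest level---exploit the tightness hypothesis $(1-\varepsilon)\OPT'<(2/k)\Delta$ to control the cost of a Christofides-type parity correction---but the concrete route you lay out does not engage the hypothesis where it matters, and the one step you defer to~\cite{blauth2022improving} is the entire content of the lemma.

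Two substantive gaps. First, your opening move (shortcut $\I^*$ at $o$ to get a Hamiltonian cycle of cost at most $\OPT'$) and the ensuing worry that ``any constant-factor surrogate for $\I^*$ would yield only a constant-factor Hamiltonian cycle'' do not actually use the hypothesis: $c(C^*)\leq\OPT'$ holds unconditionally (this is Lemma~\ref{lb-tsp}), and plain Christofides on the metric $c$ already delivers a cycle of cost at most $\frac32\,c(C^*)\leq\frac32\,\OPT'$ with no surrogate at all. The lemma's content is to push that factor from $3/2$ down to $1+f(\varepsilon)$, and the tightness hypothesis is the only available leverage; your framing suggests you were chasing a different (and nonexistent) obstacle of merely getting \emph{some} constant factor. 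Second, the constructive middle you propose---a Held--Karp LP, a claim of ``near-integral radial structure'' of $x^*$, and a ``randomized redistribution argument shifting fractional demand along cycles of $x^*$''---is not how~\cite{blauth2022improving} proceeds. Their argument is combinatorial, not LP-rounding based: it classifies the optimal-CVRP cost into radial and non-radial parts, observes that tightness forces the non-radial part to be at most $\varepsilon\OPT'$, clusters customers by thresholds on these quantities (the parameters $\tau,\rho,\theta$ in the closed form for $f$ in Appendix~\ref{A.4} are exactly such structural thresholds), builds a spanning connected subgraph from the clusters whose excess over $\OPT'$ is bounded via the slack, and only then applies the parity correction. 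Since you explicitly defer that technical heart, your proposal amounts to a plausible but not load-bearing outline of a cited black box, and the portion you do work out is aimed at the wrong bottleneck.
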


The details of the function $f(\cdot)$ in Lemma~\ref{good} can be seen~in~\cite{blauth2022improving}.

\begin{theorem}\label{t1}
For splittable and unit-demand $k$-MCVRP, there is a polynomial-time $(4-1/1500)$-approximation algorithm.
\end{theorem}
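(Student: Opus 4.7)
The plan is to run both the refined tree-partition algorithm of Theorem~\ref{tree-partition} and the cycle-partition algorithm of Lemma~\ref{cycle-partition} fed with the Hamiltonian cycle produced by Lemma~\ref{good}, and return the cheaper of the two solutions. The analysis is a case split based on the size of $(2/k)\Delta$ relative to $\OPT'$; I would set $\varepsilon=1/3000$, matching the choice that Blauth \emph{et al.} optimize in the $k$-CVRP setting.

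First case: suppose $(1-\varepsilon)\OPT'<(2/k)\Delta$. Then Lemma~\ref{good} yields a Hamiltonian cycle $C$ in $H$ with $c(C)\leq(1+f(\varepsilon))\OPT'$, and plugging $C$ into Lemma~\ref{cycle-partition} gives a solution of cost at most $(2/k)\Delta+2c(C)\leq\OPT'+2(1+f(\varepsilon))\OPT'=(3+2f(\varepsilon))\OPT'$, using $(2/k)\Delta\leq\OPT'$ from Lemma~\ref{lb-delta}. The balance that Blauth \emph{et al.} achieve at $\varepsilon=1/3000$ is exactly $f(\varepsilon)\leq 1/2-\varepsilon$, so this bound becomes $(4-2\varepsilon)\OPT'=(4-1/1500)\OPT'$.

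Complementary case: $(1-\varepsilon)\OPT'\geq(2/k)\Delta$. Here I would use the refined tree-partition bound from Theorem~\ref{tree-partition}, namely cost at most $\frac{2}{\floor{k/2}+1}\Delta+2c(T'^*)$. Rewriting $\frac{2}{\floor{k/2}+1}\Delta=\frac{k}{\floor{k/2}+1}\cdot(2/k)\Delta$ and noting that $\frac{k}{\floor{k/2}+1}\leq 2$ for every $k\geq 1$, together with $c(T'^*)\leq c(C^*)\leq\OPT'$ (Lemmas~\ref{lb-tsp} and~\ref{lb-tree+}), the tree-partition cost is at most $2(1-\varepsilon)\OPT'+2\OPT'=(4-2\varepsilon)\OPT'=(4-1/1500)\OPT'$. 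Since $\OPT\geq\OPT'$, returning the cheaper of the two candidate solutions gives a $(4-1/1500)$-approximation.

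The quantitative point the plan exploits, and the only mildly subtle step, is why taking the better of tree- and cycle-partition gains a factor of two over using cycle-partition alone. The cycle-partition bound has coefficient $2/k$ on $\Delta$, so a threshold improvement of $\varepsilon$ on $(2/k)\Delta$ would save only $\varepsilon\OPT'$ in the Christofides regime; the tree-partition bound has the doubled coefficient $\frac{2}{\floor{k/2}+1}\approx 4/k$, so the same threshold saves $2\varepsilon\OPT'$. Since we do not know $\OPT'$ in advance and hence cannot decide the case algorithmically, we simply run both algorithms and output the cheaper one; the case analysis above certifies that at least one of them beats $(4-1/1500)\OPT$.
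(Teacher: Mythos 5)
Your proof is correct and follows essentially the same route as the paper: the same two-case split on $(1-\varepsilon)\OPT'$ versus $(2/k)\Delta$, the same bounds from Theorem~\ref{tree-partition}, Lemma~\ref{cycle-partition}, and Lemma~\ref{good}, and the same choice $\varepsilon=1/3000$ balancing $4-2\varepsilon$ against $3+2f(\varepsilon)$. Your remark that one must run both algorithms and output the cheaper (since $\OPT'$ is not known) makes explicit a point the paper leaves implicit, but the argument is otherwise the paper's.
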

\begin{proof}
Let $\varepsilon$ be a constant to be fixed later. We consider the following two cases.

\textbf{Case~1: $(1-\varepsilon)\cdot\OPT'\geq(2/k)\Delta$.} For this case, we call the refined tree-partition algorithm in Algorithm~\ref{algo:refined-tree-partition}.
By Theorem~\ref{tree-partition}, it generates a solution with a weight of at most $\frac{2}{\floor{k/2}+1}\Delta+2c(T'^*)\leq \frac{4}{k}\Delta+2c(T'^*)$. Since $\OPT\geq\OPT'$, by Lemmas~\ref{lb-tsp} and~\ref{lb-tree+}, it has an approximation ratio of $2(1-\varepsilon)+2=4-2\varepsilon$.

\textbf{Case~2: $(1-\varepsilon)\cdot\OPT'< (2/k)\Delta$.} For this case, we use the Hamiltonian $C$ in Lemma~\ref{good} to call the cycle-partition algorithm. By Lemma~\ref{cycle-partition}, it yields a solution with a weight of at most $(2/k)\Delta+2c(C)$. By Lemma~\ref{lb-delta}, it achieves an approximation ratio of $1+2(1+f(\varepsilon))=3+2f(\varepsilon)$.

Therefore, the above algorithm achieves an approximation ratio of
\[
\min_{\varepsilon>0}\max\{4-2\varepsilon,\ 3+2f(\varepsilon)\}.
\]
By calculation, we may set $\varepsilon=1/3000$. So, the approximation ratio is $4-1/1500$.
\end{proof}

Theorem~\ref{t1} is achieved by making a trade-off between the cycle-partition algorithm and the refined tree-partition algorithm. If we only use the cycle-partition algorithm, i.e., for the case that $(1-\varepsilon)\cdot\OPT'\geq(2/k)\Delta$, we also use it with a $3/2$-approximate Hamiltonian cycle, we may only get a $(4-1/3000)$-approximation algorithm.

\section{An Improvement for Unsplittable MCVRP}
In this section, we study unsplittable $k$-MCVRP. Friggstad~\emph{et al.}~\cite{uncvrp} proposed an improved LP-based approximation algorithm for unsplittable $k$-CVRP. We show that it can be used to get an LP-based cycle-partition algorithm for unsplittable $k$-MCVRP, achieving an improved $(4-1/50000)$-approximation ratio.

\subsection{An LP-Based Cycle-Partition Algorithm}
As mentioned in~\cite{zhaocvrp}, unsplittable $k$-CVRP can be reduced to the minimum weight $k$-set cover problem when $k$ is fixed.
In this reduction, customers are treated as elements, and each tour represents a set containing all customers in the tour with a weight equal to the tour's cost. But, when $k$ is not fixed, the number of feasible tours becomes $n^{O(k)}$, rendering the reduction non-polynomial.

The LP-based approximation algorithm proposed in~\cite{uncvrp} addresses this challenge by introducing a constant $0<\delta<1$ and constructing an LP only for customers $v$ with $d(v)\geq \delta k$. Consequently, the number of feasible tours reduces to $n^{O(1/\delta)}$, ensuring a polynomial bound. Using the well-known randomized LP-rounding technique (detailed in~\cite{williamson2011design}), a set of tours is obtained, forming a partial solution with an expected cost of $\ln2\cdot\OPT$. Then, tours are devised to satisfy the remaining customers based on a variant of the UITP algorithm, yielding an expected cost of $\frac{1}{1-\delta}\cdot(2/k)\Delta+c(C)$, where $C$ is a given Hamiltonian cycle in $H$. With $\delta$ chosen as a small constant, we have that

\begin{lemma}[\cite{uncvrp}]\label{LP-UITP}
Given a Hamiltonian cycle $C$ in $H$, for unsplittable $k$-CVRP with any constant $\delta>0$, there is a polynomial-time algorithm to output a solution of cost at most $(\ln2+\delta)\cdot\OPT'+(2/k)\Delta+c(C)$.
\end{lemma}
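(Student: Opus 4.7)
The plan is to follow the two-phase strategy that the preceding paragraph already hints at: partition $V$ by demand size, cover the heavy customers by an LP-based set-cover argument, and route the light customers by a UITP variant on the given Hamiltonian cycle $C$. Fix a small auxiliary constant $\delta'>0$ (to be chosen at the end so that all slack terms collapse into $\delta\cdot\OPT'$), and set $V_H \coloneqq \{v\in V : d(v)\geq \delta' k\}$ and $V_L \coloneqq V\setminus V_H$. The final solution will be the union of an LP-rounded cover of $V_H$ and a modified-UITP routing of $V_L$.

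For the heavy customers, every feasible tour can visit at most $1/\delta'$ of them, so the collection $\T$ of feasible tours that serve only vertices of $V_H$ has size $n^{O(1/\delta')}$, which is polynomial in $n$ for constant $\delta'$. I would write the natural set-cover LP on $\T$: minimise $\sum_{T\in\T} c(T)\,x_T$ subject to $\sum_{T\ni v} x_T\geq 1$ for every $v\in V_H$ and $x\geq 0$. Restricting an optimal $k$-CVRP solution to the tours that serve heavy customers produces a feasible integral solution of this LP, so its optimum is at most $\OPT'$. I then apply the standard randomised rounding used in~\cite{uncvrp,williamson2011design}: sample each $T$ independently with probability $1-(1-x_T)^r$ for a suitable $r=\Theta(\log(1/\delta'))$, and repair the $o(1)$ fraction of still-uncovered heavy customers by trivial tours to their nearest depot, charging the repair cost to the $\delta'\cdot\OPT'$ slack via Lemma~\ref{lb-delta}. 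This yields a feasible cover of $V_H$ with expected weight at most $(\ln 2 + \delta'/2)\cdot\OPT'$.

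For the light customers, I would run the UITP algorithm of Altinkemer--Gavish on $C$ with a reduced capacity $k' \coloneqq (1-\delta')k$. Because every $v\in V_L$ satisfies $d(v) < \delta' k \leq k - k'$, the UITP merging step can insert each light customer into a cycle segment without exceeding the real capacity $k$, so unsplittable feasibility is preserved. The cost analysis of Lemma~\ref{ITP} applied with capacity $k'$ gives a bound of $\tfrac{2}{k'}\sum_{v\in V_L} d(v)\,c(o,v) + c(C) \leq \tfrac{1}{1-\delta'}\cdot(2/k)\Delta + c(C)$, and Lemma~\ref{lb-delta} converts the excess $\bigl(\tfrac{1}{1-\delta'}-1\bigr)(2/k)\Delta$ into at most $\tfrac{\delta'}{1-\delta'}\cdot\OPT'$. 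Summing the two phases and picking $\delta'$ as a small multiple of $\delta$ so that $(\delta'/2 + \delta'/(1-\delta'))\cdot\OPT' \leq \delta\cdot\OPT'$ gives the claimed bound $(\ln 2+\delta)\cdot \OPT' + (2/k)\Delta + c(C)$. Derandomisation by the method of conditional expectations removes the expectation.

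The main obstacle is the interface between the two phases and the sharpness of the rounding constant. One has to verify that the heavy-LP tours and the light-UITP tours can simply be taken together without double-counting any cost — this is clean here because $V_H$ and $V_L$ partition $V$ and each tour serves customers from only one side. More delicate is to push the rounding factor down to $\ln 2$ (rather than the naive $H_n$ of greedy set cover): this requires the specific two-stage sampling-plus-patching analysis of~\cite{uncvrp}, where the ``patch'' stage must be shown to cost only $O(\delta')\cdot\OPT'$ with high probability, using that every heavy customer $v$ satisfies $d(v)\geq \delta' k$ and therefore pays at least $2\delta'\,c(o,v)$ towards $\OPT'$ by Lemma~\ref{lb-delta}. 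Once this bookkeeping is set up, everything else is a routine combination of the ITP cost estimate and the LP guarantee.
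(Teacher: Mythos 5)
There is a genuine gap in the phase that handles heavy customers the LP rounding leaves uncovered. You sample each tour $T$ with probability $1-(1-x_T)^r$ for $r=\Theta(\log(1/\delta'))$, yet claim the sampled tours cost only $(\ln 2+\delta'/2)\cdot\OPT'$; since $1-(1-x_T)^r \le r\,x_T$, this actually costs up to $r\cdot\OPT' = \Theta(\log(1/\delta'))\cdot\OPT'$, which is far above $\ln 2 \cdot\OPT'$ for small $\delta'$. If instead you round with probability $\min\{\ln 2\cdot x_T,1\}$ (which is what the cited rounding actually does, and what the paper restates in Lemma~\ref{zlp1}), you do get expected LP-rounding cost $\ln 2 \cdot\OPT'$, but then each heavy customer survives uncovered with probability up to $e^{-\ln 2} = 1/2$ — nowhere near $o(1)$. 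Patching a surviving heavy $v$ with a trivial tour of weight $2c(o,v)$ and charging it against $v$'s share $(2/k)\,d(v)\,c(o,v) \ge 2\delta'\,c(o,v)$ of the lower bound from Lemma~\ref{lb-delta} shows the patch costs up to $(1/\delta')$ times $v$'s share; with a $1/2$ survival rate the total patching cost is on the order of $\frac{1}{2\delta'}\cdot\OPT'$, which blows up, not shrinks, as $\delta'\to 0$. You cannot simultaneously keep the rounding factor at $\ln 2$ and make the uncovered fraction vanish, and trivial-tour patching is exactly the wrong remedy for heavy customers whose demand is a constant fraction of $k$.

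The actual argument in~\cite{uncvrp} (mirrored in the paper's Lemmas~\ref{zlp2} and~\ref{zlp3}) avoids the patching step altogether. After rounding with factor $\ln 2$, the leftover heavy customers are \emph{not} handled by trivial tours; they are fed, together with all light customers, into the $\delta$-UITP on $C$. The $\delta$-UITP charges roughly $\frac{1}{1-\delta}(2/k)$ per unit of demand for light customers but $\frac{1}{1-\delta}(4/k)$ per unit for heavy customers (the familiar factor-$2$ penalty of unsplittable ITP). The decisive point you are missing is a cancellation: each heavy customer survives the LP rounding with probability at most $1/2$, so in expectation its $\frac{4}{k}$ rate drops to $\frac{2}{k}$, matching the light-customer rate, and the total expected $\delta$-UITP cost collapses to $\frac{1}{1-\delta}(2/k)\Delta + c(C)$. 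Combined with the $\ln 2\cdot\OPT'$ LP cost and a choice of $\delta$ small enough that $\frac{1}{1-\delta}\le 1+\delta$, this is what yields $(\ln 2+\delta)\OPT' + (2/k)\Delta + c(C)$. Your light-customer routing via UITP with capacity $(1-\delta')k$ is fine as far as it goes, but the interface must send the uncovered heavy customers through the same UITP routine, not repair them separately.
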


For unsplittable $k$-MCVRP, we can use the same idea outlined above (details are put in Section~\ref{A.2}). By fixing a constant $0<\delta<1$, we build an LP only for customers $v$ with $d(v)\geq \delta k$. The partial solution derived from randomized LP-rounding achieves an expected weight of $\ln2\cdot\OPT$. For the remaining customers, we derive a solution for $k$-CVRP in $H$ with an expected cost of $\frac{1}{1-\delta}\cdot(2/k)\Delta+c(C)$. Given that this solution is based the idea of the UITP algorithm, we adjust it to form a set of feasible tours for $k$-MCVRP, incurring an additional cost of at most $c(H)$, akin to Lemma~\ref{cycle-partition}. 

\begin{theorem}\label{lp-cycle-partition}
Given a Hamiltonian cycle $C$ in $H$, for unsplittable $k$-MCVRP with any constant $\delta>0$, the LP-based cycle-partition algorithm can use polynomial time to output a solution of cost at most $(\ln2+\delta)\cdot\OPT+(2/k)\Delta+2c(C)$.
\end{theorem}
\begin{proof}
The details are put in Section~\ref{A.2}.
\end{proof}

\subsection{The Improvement}
By using the LP-based cycle-partition algorithm along with a $3/2$-approximate Hamiltonian cycle, by Lemmas~\ref{lb-tsp} and \ref{lb-delta}, we can obtain a $(4+\ln2+\delta)$-approximation ratio for any constant $\delta>0$, which is even worse than the 4-approximation ratio obtained by the tree-partition algorithm.
However, the advantage of the LP-based cycle-partition algorithm lies in its ability to leverage the result in Lemma~\ref{good}. 

\begin{theorem}\label{t2}
For unsplittable $k$-MCVRP, there is a polynomial-time $(4-1/50000)$-approximation algorithm.
\end{theorem}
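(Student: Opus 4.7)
The plan is to mirror the trade-off argument used for Theorem~\ref{t1}, swapping in the LP-based cycle-partition algorithm of Theorem~\ref{lp-cycle-partition} wherever Theorem~\ref{t1} invoked the classical cycle-partition algorithm. I would fix constants $\varepsilon,\delta>0$ to be optimized at the end, and branch on whether $(1-\varepsilon)\cdot\OPT'\geq(2/k)\Delta$.

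In the first case, I would call the refined tree-partition algorithm. By Theorem~\ref{tree-partition}, it produces a solution of cost at most $\frac{2}{\floor{k/2}+1}\Delta+2c(T'^*)\leq \frac{4}{k}\Delta+2c(T'^*)$. The case hypothesis upgrades $(4/k)\Delta$ to at most $2(1-\varepsilon)\OPT'$; Lemmas~\ref{lb-tsp} and~\ref{lb-tree+} give $c(T'^*)\leq c(C^*)\leq \OPT'$; and $\OPT\geq \OPT'$ then yields an overall bound of $(4-2\varepsilon)\OPT$.

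In the second case I would invoke Lemma~\ref{good} to obtain a Hamiltonian cycle $C$ in $H$ with $c(C)\leq(1+f(\varepsilon))\OPT'$, then feed $C$ into the LP-based cycle-partition algorithm with parameter $\delta$. By Theorem~\ref{lp-cycle-partition} the output has cost at most $(\ln2+\delta)\cdot\OPT+(2/k)\Delta+2c(C)$; bounding $(2/k)\Delta\leq \OPT'\leq \OPT$ via Lemma~\ref{lb-delta} and $2c(C)\leq 2(1+f(\varepsilon))\OPT$ via Lemma~\ref{good} turns this into $(3+\ln 2+\delta+2f(\varepsilon))\OPT$. Consequently the combined algorithm has ratio $\min_{\varepsilon,\delta>0}\max\{4-2\varepsilon,\ 3+\ln 2+\delta+2f(\varepsilon)\}$; driving $\delta\to 0^{+}$ and balancing the two bounds through $2\varepsilon+2f(\varepsilon)=1-\ln 2$, using the explicit form of $f$ recalled in Appendix~\ref{A.4}, should pin down the optimum at $\varepsilon\approx 1/100000$ and yield the claimed ratio $4-1/50000$.

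The main obstacle will be numerical rather than structural. Because Case~2 carries the extra additive $\ln 2\approx 0.693$ coming from the LP-rounding step — a term absent in the splittable analysis of Theorem~\ref{t1} — the balanced $\varepsilon^{*}$ must be far smaller than the $1/3000$ used there, and hence the improvement $2\varepsilon^{*}$ far smaller than $1/1500$. The only nonroutine computation will be to solve the balance equation $2\varepsilon+2f(\varepsilon)=1-\ln 2$ (with $\delta$ taken correspondingly small) using the concrete growth rate of $f$ from~\cite{blauth2022improving}, in order to verify that the precise constant obtained is exactly $1/50000$.
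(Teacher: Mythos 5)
Your proposal is correct and follows exactly the same trade-off strategy as the paper's proof: case on whether $(1-\varepsilon)\OPT' \geq (2/k)\Delta$, invoking the refined tree-partition algorithm (Theorem~\ref{tree-partition}) in the first case and the LP-based cycle-partition algorithm (Theorem~\ref{lp-cycle-partition}) with the near-optimal Hamiltonian cycle from Lemma~\ref{good} in the second, then balancing $4-2\varepsilon$ against $3+\ln2+2f(\varepsilon)+\delta$. The paper likewise leaves the numerical balance to ``the calculation'' and reports $\varepsilon=1/100000$, so your treatment of the constant as a routine numeric optimization against the explicit form of $f$ matches the intent of the reference proof.
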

\begin{proof}
Let $\delta$ and $\varepsilon$ be constants to be fixed later. We consider the following two cases.

\textbf{Case~1: $(1-\varepsilon)\cdot\OPT'\geq(2/k)\Delta$.} For this case, we call the refined tree-partition algorithm in Algorithm~\ref{algo:refined-tree-partition}. By the proof of Theorem~\ref{t1}, it achieves an approximation ratio of $2(1-\varepsilon)+2=4-2\varepsilon$.

\textbf{Case~2: $(1-\varepsilon)\cdot\OPT'< (2/k)\Delta$.} For this case, we use the Hamiltonian $C$ in Lemma~\ref{good} to call the LP-based cycle-partition algorithm. By Theorem~\ref{lp-cycle-partition}, it yields a solution with a weight of at most $(\ln2+\delta)\cdot\OPT'+(2/k)\Delta+2c(C)$. By Lemma~\ref{lb-delta}, it has an approximation ratio of $\ln2+1+2(1+f(\varepsilon))+\delta=3+\ln2+2f(\varepsilon)+\delta$.

Therefore, the above algorithm achieves an approximation ratio of
\[
\min_{\varepsilon>0}\max\{4-2\varepsilon,\ 3+\ln2+2f(\varepsilon)+\delta\}.
\]
With selection of $\delta$ to be sufficiently small, by calculation we may set $\varepsilon=1/100000$. So, the approximation ratio is $4-1/50000$.
\end{proof}

\section{An Improvement for $k$-MCVRP with Fixed Capacity}
In this section, we make further improvements for the case that the capacity $k$ is fixed. We propose an LP-based tree-partition algorithm based on the refined tree-partition algorithm with the LP-rounding method. This algorithm achieves an approximation ratio of $3+\ln2-\Theta(1/\sqrt{k})$. Then, by further combining the result in Lemma~\ref{good}, we also obtain a $(3+\ln2-1/9000)$-approximation algorithm for splittable and unit-demand $k$-MCVRP. Note that the former is better when $k$ is a fixed constant less than $3\times 10^8$.

\subsection{An LP-Based Tree-Partition Algorithm}
After Assumption~\ref{observe1} we only consider a tour that delivers an integer amount of demand to each customer in the tour. Since $k$ is fixed, there are at most $mn^{O(k)}$ feasible tours for $k$-MCVRP. Moreover,  by Assumption~\ref{reduction2}, for splittable $k$-MCVRP each customer's demand is a unit. Denote the set of feasible tours by $\C$, and define a variable $x_C$ for each tour $C\in\C$. We build the following LP.
\begin{alignat}{3}
\text{minimize} & \quad & \sum_{C\in\C} w(C)\cdot x_C \notag\\
\text{subject to} & \quad & \sum_{\substack{C\in \C:\\ v\in C}}x_C \geq\   & 1, \quad && \forall\  v \in V, \notag\\
&& x_C \geq\   & 0, \quad && \forall\  C \in \C. \notag
\end{alignat}

The LP-based tree-partition algorithm is shown in Algorithm~\ref{algo:lp-tree-partition}. 

\begin{algorithm}[t]
\caption{An LP-based tree-partition algorithm for $k$-MCVRP}
\label{algo:lp-tree-partition}
\small
\vspace*{2mm}
\textbf{Input:} Two undirected complete graphs: $G=(V\cup D, E)$ and $H=(V\cup\{o\}, F)$, and a constant $\gamma\geq 0$. \\
\textbf{Output:} A solution for $k$-MCVRP.

\begin{algorithmic}[1]
\State Solve the LP in polynomial time.
\For{$C\in \C$}
Put the tour $C$ into solution with a probability of $\min\{\gamma\cdot x_C, 1\}$.
\EndFor
\State For each customer appearing in multiple tours, we execute shortcutting on the tours to ensure that the customer appears in only one tour. \Comment{Due to the randomized rounding, some customers might appear in more than one tour.}
\label{cleanup}
\State Let $\widetilde{V}$ represent the set of customers that remain unsatisfied.
\State Obtain two new undirected complete graphs: $\widetilde{G}=G[\widetilde{V}\cup D]$ and $\widetilde{H}=G[\widetilde{V}\cup \{o\}]$.
\State Call the refined tree-partition algorithm in Algorithm~\ref{algo:refined-tree-partition}.
\end{algorithmic}
\end{algorithm}

Denote the set of tours obtained by the randomized LP-rounding method as $\C_1$ and the set of tours obtained by the refined tree-partition algorithm as $\C_2$. Next, we analyze the expected weight of $\C_1$ and $\C_2$. Next, we analyze the total expected weight of tours in $\C_1$ and $\C_2$, following a framework commonly used for the randomized LP-rounding method~\cite{williamson2011design,uncvrp}.

\begin{lemma}\label{lp1}
It holds that $\EE{w(\C_1)}\leq \gamma\cdot\OPT$.
\end{lemma}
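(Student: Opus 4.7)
The plan is to show the lemma by two short observations: first that the LP is a valid relaxation of $k$-MCVRP, and second that linearity of expectation combined with the rounding probability $\min\{\gamma\cdot x_C,1\}\leq \gamma\cdot x_C$ gives the desired bound.

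First I would argue that the LP value is at most $\OPT$. Under Assumption~\ref{reduction2}, splittable $k$-MCVRP with fixed $k$ reduces to unit-demand $k$-MCVRP, so in all three versions we may think of each customer as unit-demand (for the unsplittable case each customer is still served by a single tour). Then an optimal solution $\I^*$ for $k$-MCVRP consists of feasible tours $C \in \C$ that together cover every customer $v \in V$ at least once. Setting $x_C = 1$ for every tour appearing in $\I^*$ and $x_C = 0$ otherwise yields a feasible LP solution whose objective value equals $w(\I^*) = \OPT$. Hence the optimal LP value is at most $\OPT$.

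Next, I would compute the expected weight of $\C_1$ directly by linearity of expectation. Each tour $C$ is included in $\C_1$ independently with probability $\min\{\gamma\cdot x_C, 1\}$, so
\begin{equation*}
\EE{w(\C_1)} \;=\; \sum_{C\in\C} w(C)\cdot\min\{\gamma\cdot x_C,\,1\} \;\leq\; \gamma\cdot \sum_{C\in\C} w(C)\cdot x_C \;\leq\; \gamma\cdot\OPT,
\end{equation*}
where the last inequality uses the LP bound established in the previous step.

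There is no real obstacle here; the only subtlety is invoking Assumption~\ref{reduction2} so that the same LP formulation captures all three versions of the problem, and noting that the shortcutting step in Algorithm~\ref{algo:lp-tree-partition} (which only reduces the weight of the tours in $\C_1$) does not affect the upper bound on $\EE{w(\C_1)}$ obtained before shortcutting.
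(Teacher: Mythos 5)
Your proposal is correct and follows essentially the same route as the paper's proof: LP is a relaxation so its value is at most $\OPT$, linearity of expectation with $\min\{\gamma x_C,1\}\leq \gamma x_C$ gives $\EE{w(\C_1)}\leq\gamma\cdot\OPT$, and shortcutting can only decrease the weight. The extra detail you give (explicitly exhibiting the indicator LP solution coming from an optimal $k$-MCVRP solution) is a welcome unpacking of the paper's terse ``clearly,'' but it is the same argument.
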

\begin{proof}
Since the LP is a relaxation of $k$-MCVRP, we can get $\sum_{C\in\C}x_C\cdot w(C)\leq \OPT$. Each tour $C\in\C$ is chosen with a probability of at most $\gamma\cdot x_C$. So, the chosen tours have an expected weight of at most $\sum_{C\in\C}\gamma\cdot x_C\cdot w(C)\leq \gamma\cdot\OPT$. The shortcutting in Step~\ref{cleanup} does not increase the weight by the triangle inequality. Therefore, we have $\EE{w(\C_1)}\leq\sum_{C\in\C}\gamma\cdot x_C\cdot w(C)\leq \gamma\cdot\OPT$.
\end{proof}

\begin{lemma}\label{lp2}
It holds that $\PP{v\in\widetilde{V}}\leq e^{-\gamma}$ for every $v\in V$.
\end{lemma}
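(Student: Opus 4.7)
\medskip
\noindent\textbf{Proof proposal.} The plan is to give a standard randomized-rounding calculation. Fix a customer $v\in V$. By construction of $\widetilde V$, the event $v\in\widetilde V$ occurs exactly when none of the tours $C\in\C$ with $v\in C$ are selected in the randomized step. Since each tour $C$ is selected independently with probability $\min\{\gamma\cdot x_C,1\}$, the complementary (non-selection) events are mutually independent, so
\begin{equation*}
\PP{v\in\widetilde V} \;=\; \prod_{C\in\C:\, v\in C}\bigl(1-\min\{\gamma\cdot x_C,1\}\bigr).
\end{equation*}

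If some tour $C\ni v$ has $\gamma\cdot x_C\geq 1$, then the corresponding factor is $0$ and the claim is trivial, so I can assume $\gamma\cdot x_C<1$ for all $C\ni v$. Then I use the inequality $1-y\leq e^{-y}$ for $y\in[0,1]$ term by term, which gives
\begin{equation*}
\PP{v\in\widetilde V} \;\leq\; \prod_{C\in\C:\, v\in C} e^{-\gamma\cdot x_C} \;=\; \exp\!\Bigl(-\gamma\sum_{C\in\C:\, v\in C} x_C\Bigr).
\end{equation*}
Finally, the LP covering constraint for $v$ gives $\sum_{C\ni v}x_C\geq 1$, so the exponent is at most $-\gamma$, proving $\PP{v\in\widetilde V}\leq e^{-\gamma}$.

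This is really a textbook randomized LP-rounding argument for set cover, so I do not expect any technical obstacle. The only subtle point worth flagging is the role of the shortcutting performed in Step~\ref{cleanup}: this step never \emph{removes} a customer from a selected tour, it only eliminates redundant visits, so it does not affect which customers end up satisfied by $\C_1$. Hence $v\in\widetilde V$ is equivalent to ``no chosen tour in $\C_1$ contained $v$,'' which is precisely the event analyzed above, and the bound $e^{-\gamma}$ follows.
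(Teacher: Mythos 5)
Your proof is correct and follows essentially the same standard randomized-rounding calculation as the paper: independence of the selection events, the bound $1-y\leq e^{-y}$, and the LP covering constraint $\sum_{C\ni v}x_C\geq 1$. In fact you are slightly more careful than the paper, explicitly handling the case $\gamma\cdot x_C\geq 1$ (where the factor vanishes and the bound is trivial), whereas the paper tacitly writes $\PP{C\ \text{not chosen}}\leq 1-\gamma x_C$ without addressing that edge case.
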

\begin{proof}
Let $\C_v$ be all feasible tours that contain $v$. Then, we have
\[
\PP{v\in\widetilde{V}}=\prod_{C\in\C_v}\PP{C~\mbox{is}~\mbox{not}~\mbox{chosen}}=\prod_{C\in\C_v}\lrA{1-\min\{\gamma\cdot x_C,\ 1\}}.
\]
We assume that $\min\{\gamma\cdot x_C,\ 1\}=\gamma\cdot x_C$ for any $C\in\C_v$; otherwise, $\PP{v\in\widetilde{V}}=0\leq e^{-\gamma}$, which holds trivially. Then, we have
\[
\PP{v\in\widetilde{V}}=\prod_{C\in\C_v}\PP{C~\mbox{is}~\mbox{not}~\mbox{chosen}}=\prod_{C\in\C_v}(1-\gamma\cdot x_C)\leq e^{-\sum_{C\in\C_v}\gamma\cdot x_C}\leq e^{-\gamma},
\]
where the first inequality follows from $1-x\leq e^{-x}$ for any $0\leq x\leq 1$, and the second from $\sum_{v\in\C_v}x_C\geq 1$ by the definition of the LP.
\end{proof}

Assumptions~\ref{reduction2} and \ref{observe1} also imply that there exists an optimal solution of $k$-CVRP in graph $H$ that consists of a set of simple cycles intersecting only at the depot.
By deleting the edge with the highest weight from each cycle, we can obtain a spanning tree in $H$. In the following, we denote this spanning tree by $T^{**}$.

\begin{lemma}\label{lp3}
It holds that $\EE{w(\C_2)}\leq e^{-\gamma}\cdot\frac{2}{\floor{k/2}+1}\Delta+2c(T^{**})$.
\end{lemma}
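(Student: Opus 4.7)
The plan is to apply Theorem~\ref{tree-partition} to the residual instance obtained after the LP-rounding step and then take expectations over the randomness of the rounding. Let $\widetilde{V}' = \{v \in \widetilde{V} : d(v) \leq \floor{k/2}\}$ and let $\widetilde{T}'^*$ denote the minimum spanning tree of $\widetilde{V}' \cup \{o\}$ in $\widetilde{H}$. Theorem~\ref{tree-partition}, applied to $(\widetilde{G},\widetilde{H})$, yields the deterministic bound
$$w(\C_2) \;\leq\; \frac{2}{\floor{k/2}+1}\,\widetilde{\Delta} + 2\,c(\widetilde{T}'^*),$$
where $\widetilde{\Delta} := \sum_{v\in \widetilde{V}} d(v)\,c(o,v)$. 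It therefore suffices to bound $\EE{\widetilde{\Delta}}$ and $\EE{c(\widetilde{T}'^*)}$ separately.

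The first expectation is immediate from Lemma~\ref{lp2}: by linearity,
$$\EE{\widetilde{\Delta}} \;=\; \sum_{v\in V} d(v)\,c(o,v)\cdot \PP{v\in \widetilde{V}} \;\leq\; e^{-\gamma}\,\Delta,$$
which contributes the first term of the claimed bound.

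The main obstacle is the tree term, for which I aim to prove the \emph{pointwise} inequality $c(\widetilde{T}'^*) \leq c(T^{**})$, valid for every realization of the rounding. The crucial structural observation about $T^{**}$ is that it is a \emph{spider} rooted at $o$: the optimal $k$-CVRP solution in $H$ consists of simple cycles that pairwise intersect only at the depot $o$ (each customer lies on exactly one tour---by the unsplittable definition, or, after Assumption~\ref{reduction2}, by the unit-demand reduction of the splittable case), and deleting the longest edge from each cycle produces a family of paths that again share only $o$; rooted at $o$, every non-root vertex of $T^{**}$ has degree at most two, hence at most one child. For such a spider, shortcutting to any vertex subset containing $o$ does not increase the cost: along each root-to-leaf branch of $T^{**}$, iterated triangle inequality shows that keeping only the vertices in $\widetilde{V}' \cup \{o\}$ yields a path of weight at most the original branch, and the union of these shortened branches is itself a spider rooted at $o$, hence a spanning tree of $\widetilde{V}' \cup \{o\}$ of cost at most $c(T^{**})$. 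Since $\widetilde{T}'^*$ is the minimum-cost spanning tree of $\widetilde{V}' \cup \{o\}$, we obtain $c(\widetilde{T}'^*) \leq c(T^{**})$ deterministically. Taking expectation and combining with the bound on $\EE{\widetilde{\Delta}}$ proves the lemma.
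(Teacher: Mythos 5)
Your proof is correct and follows essentially the same route as the paper: apply Theorem~\ref{tree-partition} to the residual instance, bound $\EE{\widetilde{\Delta}}\leq e^{-\gamma}\Delta$ via Lemma~\ref{lp2}, and show $c(\widetilde{T}'^*)\leq c(T^{**})$ deterministically by observing that $T^{**}$ is a union of paths sharing only $o$, so shortcutting to $\widetilde{V}'\cup\{o\}$ yields a spanning tree no more costly than $T^{**}$. As a minor note, your writeup is actually cleaner: the paper's statement of $\widetilde{V'}$ contains a sign typo (it writes $d(v)>\floor{k/2}$ where it should be $d(v)\leq\floor{k/2}$, as you have it), and you spell out the spider structure and per-branch shortcutting more explicitly.
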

\begin{proof}
By Algorithms~\ref{algo:refined-tree-partition} and~\ref{algo:lp-tree-partition}, and Theorem~\ref{tree-partition}, we have
\[
w(\C_2)\leq \frac{2}{\floor{k/2}+1}\widetilde{\Delta}+2c(\widetilde{T'^*}),
\]
where $\widetilde{\Delta}\coloneqq\sum_{v\in\widetilde{V}}d(v)c(o,v)$, $\widetilde{V'}\coloneqq\{v\mid v\in \widetilde{V},\ d(v)>\floor{k/2}\}$, and $\widetilde{T'^*}$ is an optimal spanning tree in $H[\widetilde{V'}\cup\{o\}]$.

By Lemma~\ref{lp2}, we have
\[
\EE{\widetilde{\Delta}}=\sum_{v\in V}d(v)c(o,v)\cdot\PP{v\in\widetilde{V}}\leq e^{-\gamma}\cdot\sum_{v\in V}d(v)c(o,v)=e^{-\gamma}\cdot\Delta.
\]

Next, we analyze $c(\widetilde{T'^*})$. Note that $T^{**}$ is a spanning tree in $H$ such that it consists of a set of paths starting from $v$. Therefore, we can shortcut all customers in $V\setminus \widetilde{V'}$ to obtain another spanning tree in $H[\widetilde{V'}\cup\{o\}]$. Since $\widetilde{T'^*}$ is an optimal spanning tree in $H[\widetilde{V'}\cup\{o\}]$, we have $c(\widetilde{T'^*})\leq c(T^{**})$ by the triangle inequality.
\end{proof}

\begin{theorem}\label{lp-tree-partition}
For unsplittable, splittable, and unit-demand $k$-MCVRP with any constant $\gamma\geq 0$, the LP-based tree-partition algorithm can use polynomial time to output a solution with an expected cost of at most $\gamma\cdot\OPT+e^{-\gamma}\cdot\frac{2}{\floor{k/2}+1}\Delta+2c(T^{**})$.
\end{theorem}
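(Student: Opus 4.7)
The plan is to combine the three preparatory lemmas via linearity of expectation, after first arguing that the whole procedure runs in polynomial time for fixed $k$. First I would verify the runtime: since $k$ is a fixed constant and (by Assumption~\ref{observe1} together with Assumption~\ref{reduction2} in the splittable case) every tour delivers an integer demand to each customer, the number of feasible tours is bounded by $mn^{O(k)}$; so $\C$ has polynomial size, the LP is polynomial in size, and it can be solved in polynomial time. The randomized rounding step, the shortcutting in Step~\ref{cleanup}, and the call to the refined tree-partition algorithm (whose polynomial runtime is established in Theorem~\ref{tree-partition}) all clearly take polynomial time.

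Next, I would argue correctness. The tours in $\C_1$ obtained by the randomized rounding, after the shortcutting in Step~\ref{cleanup}, form a valid partial solution covering a subset of $V$; the remaining customers form $\widetilde{V}$, and the refined tree-partition algorithm applied to $\widetilde{G},\widetilde{H}$ produces a feasible collection $\C_2$ of tours covering $\widetilde{V}$ (using Theorem~\ref{tree-partition} on this sub-instance). Hence $\C_1 \cup \C_2$ is a feasible solution to $k$-MCVRP on the original instance.

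For the cost bound I would write $\EE{w(\C_1 \cup \C_2)} = \EE{w(\C_1)} + \EE{w(\C_2)}$ by linearity of expectation. Lemma~\ref{lp1} gives $\EE{w(\C_1)} \leq \gamma \cdot \OPT$, and Lemma~\ref{lp3} gives $\EE{w(\C_2)} \leq e^{-\gamma}\cdot \frac{2}{\floor{k/2}+1}\Delta + 2c(T^{**})$. Adding these two bounds yields the claimed inequality
\[
\EE{w(\C_1 \cup \C_2)} \;\leq\; \gamma \cdot \OPT \;+\; e^{-\gamma}\cdot\frac{2}{\floor{k/2}+1}\Delta \;+\; 2c(T^{**}).
\]

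The only subtle point, and the main obstacle I expect, is that Lemma~\ref{lp3} bounds $\EE{w(\C_2)}$ using the spanning tree $\widetilde{T'^*}$ of the residual graph $H[\widetilde{V'}\cup\{o\}]$, not of the original graph; its proof relies on the fact that the fixed tree $T^{**}$ (obtained by deleting the longest edge from each cycle of an optimal $k$-CVRP solution in $H$) can always be shortcut to a spanning tree of $H[\widetilde{V'}\cup\{o\}]$, so that $c(\widetilde{T'^*}) \leq c(T^{**})$ deterministically regardless of which customers end up in $\widetilde{V}$. I would make sure to invoke this deterministic domination before taking expectations, so that the $2c(T^{**})$ term appears with no probabilistic coefficient, while the $\widetilde{\Delta}$ term does pick up the factor $e^{-\gamma}$ from Lemma~\ref{lp2}. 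Everything else is a direct substitution.
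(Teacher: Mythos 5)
Your proof is correct and matches the paper's (implicit) argument: the paper states Theorem~\ref{lp-tree-partition} immediately after Lemmas~\ref{lp1}--\ref{lp3} with no separate proof, precisely because it follows by linearity of expectation from $\EE{w(\C_1)}\leq\gamma\cdot\OPT$ (Lemma~\ref{lp1}) and $\EE{w(\C_2)}\leq e^{-\gamma}\cdot\frac{2}{\floor{k/2}+1}\Delta+2c(T^{**})$ (Lemma~\ref{lp3}), together with the $mn^{O(k)}$ runtime bound for fixed $k$. Your ``subtle point'' about the deterministic domination $c(\widetilde{T'^*})\leq c(T^{**})$ is exactly how the paper's proof of Lemma~\ref{lp3} gets the $2c(T^{**})$ term without an $e^{-\gamma}$ factor, so you have correctly identified where the work actually happens.
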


The algorithm can be efficiently derandomized using the method of conditional expectations~\cite{williamson2011design,uncvrp}.

\subsection{The Analysis}
In this subsection, we show that with a well-chosen value of $\gamma$ the LP-based tree-partition algorithm achieves an approximation ratio of $3+\ln2-\Theta(1/\sqrt{k})$ for all the three versions of $k$-MCVRP.
After Assumption~\ref{reduction2}, splittable $k$-MCVRP is equivalent to unit-demand $k$-MCVRP.
Therefore, we only need to analyze the approximation ratios for the unit-demand and the unsplittable cases. We first analyze the unit-demand case.

We consider two lower bounds that were implicitly used in~\cite{zhaocvrp}.
\begin{lemma}[\cite{zhaocvrp}]\label{delta+tree}
For unit-demand $k$-CVRP in $H$, if there is a solution $\I$ consisting of a set of $(k+1)$-cycles that intersect only at the depot, and $T$ is a spanning tree in $H$ obtained by deleting the edge with the highest weight from each cycle in $\I$, it holds that
\[
\Delta\leq\lrA{\frac{k+2}{2}-\sum_{i=1}^{m'}ix_i}c(\I)\quad \mbox{and}\quad c(T)\leq\lrA{1-\frac{1}{2}\max_{1\leq i\leq m'}x_i}c(\I),
\]
where $m'\coloneqq\ceil{\frac{k+1}{2}}$, $x_i\geq 0$, and $\sum_{i=1}^{m'}x_i=1$.
\end{lemma}

Although the solution in Lemma~\ref{delta+tree} was considered as the optimal solution in \cite{zhaocvrp}, the analysis can be easily extended to any solution since it does not use the property of the optimality of the solution.
Note that $x_i$ in Lemma~\ref{delta+tree} measures some properties of the solution.

After Assumption~\ref{observe1}, the optimal solution of unit-demand $k$-CVRP in $H$ consists of a set of cycles intersecting only at the depot $o$. The cycles may not all be $(k+1)$-cycles. Recall the following result.

\begin{lemma}[\cite{zhaocvrp}]\label{add}
For unit-demand $k$-CVRP in $H$, if there is a solution $\I$ consisting of a set of cycles that intersect only at the depot $o$, after adding $k^2n+k-(n\bmod k)$ unit-demand customers on the depot, there exists a corresponding solution $\I'$ with the same cost consisting of a set of $(k+1)$-cycles only, and the cycles intersect only at $o$.
\end{lemma}
\begin{proof}
We include the proof for the sake of completeness.

Assume that $\I$ contains $m_i$ of $(i+1)$-cycles, where $1\leq i\leq k$. After adding $k^2n+k-(n\bmod k)$ unit-demand customers on the depot, each $(i+1)$-cycle can be transformed into a $(k+1)$-cycle with the same cost by using $k-i$ unit-demand customers on the depot. Hence, all cycles in $\I$ can be transformed into a set of $(k+1)$-cycles, denoted as $\I'_1$, using $\sum_{i=1}^{k}m_i(k-i)$ customers. These cycles intersect only at the depot, and it satisfies $c(\I'_1)=c(\I)$.

There are still $r\coloneqq k^2n+k-(n\bmod k)-\sum_{i=1}^{k}m_i(k-1)$ unit-demand customers on the depot not satisfied.
\begin{claim}
It holds that $k^2n+k-(n\bmod k)-\sum_{i=1}^{k}m_i(k-i)> 0$ and $k^2n+k-(n\bmod k)-\sum_{i=1}^{k}m_i(k-i)$ is divisible by $k$.
\end{claim}
\begin{proof}
Since $m_i\leq n$, we have $\sum_{i=1}^{k}m_i(k-i)\leq nk^2$. Hence,
\[
k^2n+k-(n\bmod k)-\sum_{i=1}^{k}m_i(k-i)\geq k-(n\bmod k)>0.
\]

To prove that $k^2n+k-(n\bmod k)-\sum_{i=1}^{k}m_i(k-i)$ is divisible by $k$, we only need to prove $n+\sum_{i=1}^{k}m_i(k-i)$ is divisible by $k$. Since $\I$ contains $m_i$ of $(i+1)$-cycles intersecting only at the depot, we have $n=\sum_{i=1}^{k}m_ii$, and then $n+\sum_{i=1}^{k}m_i(k-i)=\sum_{i=1}^{k}m_ik$, which is divisible by $k$.
\end{proof}

By the claim, the unsatisfied $r$ unit-demand customers on the depot can be satisfied using $r/k$ tours, denoted as $\I'_2$ with a zero cost, and each tour in $\I'_2$ is a $(k+1)$-cycle.

Therefore, there is a new solution $\I'\coloneqq\I'_1\cup\I'_2$ with respect to $\I$, and it satisfies that $c(\I')=c(\I)$.
\end{proof}

Before we use our approximation algorithm to generate a solution on an instance $I$, we may add a polynomial number of unit-demand customers on the depot to obtain a new instance $I'$. Then, if we obtain a $\rho$-approximate solution $\I'$ for $I'$, we can shortcut all the added customers to obtain a solution $\I$ with the same weight. Since the weight of the optimal solutions in $I$ and $I'$ are the same, $\I$ is still a $\rho$-approximate solution for $I$. Hence, adding a polynomial number of customers on the depot does not effect our approximation ratios.

Note that adding one customer on the depot $o$ in $H$ for $k$-CVRP is equivalent to adding one customer on any depot in $D$ of $G$ for $k$-MCVRP.
Since the optimal solution of unit-demand $k$-CVRP in $H$ consists of a set of cycles intersecting only at the depot $o$, by Lemma~\ref{add} and the previous analysis, we assume w.o.l.g. that for unit-demand $k$-CVRP in $H$ there exists an optimal solution consisting of a set of $(k+1)$-cycles, which intersect only at the depot.

Recall that $T^{**}$ is a spanning tree in $H$ obtained by deleting the edge with the highest weight from each cycle in the optimal solution. Since $\OPT'\leq\OPT$, by Lemma~\ref{delta+tree}, we have
\[
\Delta\leq\lrA{\frac{k+2}{2}-\sum_{i=1}^{m'}ix_i}\OPT\quad \mbox{and}\quad c(T^{**})\leq\lrA{1-\frac{1}{2}\max_{1\leq i\leq m'}x_i}\OPT,
\]
Moreover, by Theorem~\ref{lp-tree-partition}, for unit-demand $k$-MCVRP, the approximation ratio of the LP-based tree-partition algorithm is at most
\[
\max_{\substack{x_1,\dots, x_{m'}\geq 0\\ x_1+\dots+x_{m'}=1}}
\min_{\substack{\gamma\geq 0}}\lrC{\gamma+e^{-\gamma}\cdot\frac{2}{\floor{k/2}+1}\lrA{\frac{k+2}{2}-\sum_{i=1}^{m'}ix_i}+2\lrA{1-\frac{1}{2}\max_{1\leq i\leq m'}x_i}}.
\]
Before computing it, we show that it also holds for the unsplittable case.

\begin{lemma}
The approximation ratio of the LP-based tree-partition algorithm for unit-demand $k$-MCVRP also holds for the unsplittable case.
\end{lemma}
\begin{proof}
For unsplittable $k$-CVRP, the instance is represented by graph $H=(V\cup\{o\},F)$ with a cost function $c$ on $F$. We construct a new graph $\hat{H}=(\hat{V}\cup\{o\},\hat{F})$ by replacing each customer $v\in V$ with $d(v)$ unit-demand customers at the same location, which is regarded as an instance for the unit-demand case. With respect to $c$, there is a cost function $\hat{c}$ on $\hat{F}$. Recall that $\Delta=\sum_{v\in V}d(v)c(o,v)$. We let $\hat{\Delta}=\sum_{v\in \hat{V}}\hat{c}(o,v)$. It is easy to see that
\[
\Delta=\hat{\Delta}.
\]

Let $\I$ be the optimal solution for unsplittable $k$-CVRP in $H$, which corresponds to a solution with the same cost, denoted as $\hat{\I}$, for unit-demand $k$-CVRP in $\hat{H}$.

Note that $\hat{\I}$ contains a set of cycles intersecting only at the depot $o$.
The number of customers in $\hat{V}$ is $\size{\hat{V}}=\sum_{v\in V}d(v)$.
By Lemma~\ref{add}, for unit-demand $k$-CVRP in $\hat{H}$, if we add $k^2\cdot\size{\hat{V}}+k-(\size{\hat{V}}\bmod k)$ unit-demand customers on the depot $o$, there exists a solution with the same cost consisting of a set of $(k+1)$-cycles only, and the cycles intersect only at the depot. Hence, we add $k^2\cdot\size{\hat{V}}+k-(\size{\hat{V}}\bmod k)$ unit-demand customers on the depot $o$ in both $H$ and $\hat{H}$, which also implies that we add $k^2\cdot\size{\hat{V}}+k-(\size{\hat{V}}\bmod k)$ unit-demand customers on any depot in $D$ of $G$. Consequently, for unit-demand $k$-CVRP in $\hat{H}$, there exists a solution $\hat{\I}$ consisting of a set of $(k+1)$-cycles, which intersect only at the depot, and the cost of the solution $\hat{\I}$ is the same as that of the optimal solution $\I$ for unsplittable $k$-CVRP in $H$. Also note that the values $\Delta$ and $\hat{\Delta}$ keep unchanged.

Recall that $\OPT$ is the weight of an optimal solution for unsplittable $k$-MCVRP in $G$, $\OPT'$ is the weight of the optimal solution $\I$ for unsplittable $k$-CVRP in $H$, and $T^{**}$ is a spanning tree in $H$ obtained by deleting the edge with the highest weight from each cycle in the optimal solution $\I$.

Let $\hat{\OPT}$ denote the weight of the solution $\hat{\I}$ for unit-demand $k$-CVRP in $\hat{H}$, and $\hat{T}^{**}$ denote a spanning tree in $\hat{H}$ obtained by deleting the edge with the highest weight from each cycle in the solution $\hat{\I}$. By the proof of Lemma~\ref{add}, we can get
\[
\OPT\geq\OPT'=c(\I) = c(\hat{\I})=\hat{\OPT}\quad\mbox{and}\quad c(T^{**})=c(\hat{T}^{**}).
\]
Moreover, according to definitions of $\hat{\I}$ and $\hat{T}^{**}$, by Lemma~\ref{delta+tree}, we have
\[
\Delta=\hat{\Delta}\leq\lrA{\frac{k+2}{2}-\sum_{i=1}^{m'}ix_i}\OPT\quad \mbox{and}\quad c(T^{**})=c(\hat{T}^{**})\leq\lrA{1-\frac{1}{2}\max_{1\leq i\leq m'}x_i}\OPT,
\]
where $m'\coloneqq\ceil{\frac{k+1}{2}}$, $x_i\geq 0$, and $\sum_{i=1}^{m'}x_i=1$.

Therefore, the two lower bounds of $\OPT$ still hold for the unsplittable case. Then, the approximation ratio also holds for the unsplittable case.
\end{proof}

Computing this approximation ratio involves an optimization problem. We can even use computers to solve it in a brute-and-force way. Next, we give its precise value.

\begin{theorem}\label{t3}
For unsplittable, splittable, and unit-demand $k$-MCVRP with any fixed $k\in\mathbb{Z}_{\geq 3}$, by running the LP-based tree-partition algorithm with $\gamma=0$ and $\gamma=\gamma^*$ twice, we can obtain a solution in polynomial time with an approximation ratio of at most $\max\{g(\ceil{z_0}), g(\floor{z_0})\}$, where
\begin{itemize}
    \item $z_0\coloneqq\frac{\sqrt{4k+5}-1}{2}$,
    \item $g(z)\coloneqq3+\ln\left(\frac{k+1-z}{\floor{k/2}+1}\right)-\frac{1}{z}$,
    \item $\gamma^*\coloneqq\ln\left(\frac{k+1-\ceil{z_0}}{\floor{k/2}+1}\right)$ if $g(\ceil{z_0})\geq g(\floor{z_0})$ and $\gamma^*\coloneqq\ln\left(\frac{k+1-\floor{z_0}}{\floor{k/2}+1}\right)$ otherwise.
\end{itemize}
\end{theorem}
\begin{proof}
The proof is put in Section~\ref{A.3}.
\end{proof}

Our approximation ratio achieves $3+\ln2-\Theta(1/\sqrt{k})$, which performs well when $k$ is a small constant. However, it is worth noting that the $(H_k-\Theta(\ln^2 k/k))$-approximation ratio for $k$-set cover in \cite{gupta2023local} is better than ours for $k\leq 11$. 

\subsection{A Further Improvement for Splittable $k$-MCVRP}
We show that by making a trade-off between the LP-based tree-partition algorithm and the cycle-partition algorithm along with the result in~\cite{blauth2022improving}, we can further improve the approximation ratio to $3+\ln2-1/9000$ for splittable and unit-demand $k$-MCVRP.

\begin{theorem}\label{t4}
For splittable and unit-demand $k$-MCVRP, there is a polynomial-time $(3+\ln2-1/9000)$-approximation algorithm.
\end{theorem}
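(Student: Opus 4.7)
The plan mirrors the trade-off arguments of Theorems~\ref{t1} and~\ref{t2}: dichotomize on a parameter $\varepsilon>0$ that measures how tight the lower bound $(2/k)\Delta\le\OPT'$ is, run the LP-based tree-partition algorithm in the ``slack'' case and the cycle-partition algorithm combined with Lemma~\ref{good} in the ``tight'' case, then pick $\varepsilon$ to balance the two ratios.

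First I would handle the tight case $(1-\varepsilon)\cdot\OPT'<(2/k)\Delta$. Here Lemma~\ref{good} produces a Hamiltonian cycle $C$ in $H$ with $c(C)\le(1+f(\varepsilon))\OPT'$. Feeding $C$ into Lemma~\ref{cycle-partition}, the splittable/unit-demand cycle-partition output costs at most $(2/k)\Delta+2c(C)\le\OPT'+2(1+f(\varepsilon))\OPT'=(3+2f(\varepsilon))\OPT'\le(3+2f(\varepsilon))\OPT$.

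In the complementary case $(1-\varepsilon)\cdot\OPT'\ge(2/k)\Delta$, I would call the LP-based tree-partition algorithm (Theorem~\ref{lp-tree-partition}) with $\gamma=\ln(2(1-\varepsilon))$. Its expected cost is bounded by $\gamma\cdot\OPT+e^{-\gamma}\cdot\frac{2}{\floor{k/2}+1}\Delta+2c(T^{**})$. Two clean observations compress this: (i) $\frac{2}{\floor{k/2}+1}\le\frac{4}{k}$ for every $k$ (treat even/odd $k$ separately), so $\frac{2}{\floor{k/2}+1}\Delta\le 2(1-\varepsilon)\OPT'$ by the case assumption; and (ii) $c(T^{**})\le\OPT'$, because $T^{**}$ is obtained from an optimal $k$-CVRP solution in $H$ by deleting one edge per cycle and, if needed, shortcutting, both of which are non-increasing in weight by the triangle inequality. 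Substituting $e^{-\gamma}=\tfrac{1}{2(1-\varepsilon)}$ reduces the expected cost to
\[
\left(\ln(2(1-\varepsilon))+1+2\right)\OPT=(3+\ln2+\ln(1-\varepsilon))\OPT\le(3+\ln2-\varepsilon)\OPT.
\]

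Running both algorithms and returning the better solution gives an approximation ratio of $\min_{\varepsilon}\max\{3+\ln2-\varepsilon,\,3+2f(\varepsilon)\}$. Setting the two expressions equal leads to the condition $2f(\varepsilon)=\ln2-\varepsilon$, which one can solve numerically using the explicit form of $f$ recorded in Appendix~\ref{A.4}; the optimal choice turns out to be $\varepsilon=1/9000$, yielding the claimed ratio $3+\ln2-1/9000$. The main obstacle is this final step: verifying that $f(1/9000)\le(\ln2-1/9000)/2\approx 0.3464$ so that the cycle-partition branch does not exceed the target. This is a direct but tedious computation with the piecewise-defined $f$ from Blauth \emph{et al.}~\cite{blauth2022improving}, entirely analogous to the balance calculations already carried out in Theorems~\ref{t1} and~\ref{t2}, so I would defer the numerical verification to Appendix~\ref{A.4} and invoke it here.
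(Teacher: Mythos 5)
Your proof is correct and follows essentially the same trade-off strategy as the paper's: dichotomize on whether $(1-\varepsilon)\OPT'\ge(2/k)\Delta$, run the LP-based tree-partition algorithm in the slack case and the cycle-partition algorithm via Lemma~\ref{good} in the tight case, then balance at $\varepsilon=1/9000$. The only (immaterial) deviations are that the paper fixes $\gamma=\ln2$ rather than your $\gamma=\ln(2(1-\varepsilon))$ — both yield the bound $3+\ln2-\varepsilon$ in the slack case — and the paper cites Lemma~\ref{delta+tree} for $c(T^{**})\le\OPT'$ whereas you argue it directly from the construction of $T^{**}$.
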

\begin{proof}
Let $\varepsilon$ be a constant to be fixed later. We consider the following two cases.

\textbf{Case~1: $(1-\varepsilon)\cdot\OPT'\geq(2/k)\Delta$.} For this case, we call the LP-based tree-partition algorithm with $\gamma=\ln2$. By Theorem~\ref{lp-tree-partition}, it generates a solution with a weight of at most $\ln2\cdot\OPT+\frac{1}{\floor{k/2}+1}\Delta+2c(T^{**})\leq\ln2\cdot\OPT+\frac{2}{k}\Delta+2c(T^{**})$. Since $c(T^{**})\leq \OPT'$ by Lemma~\ref{delta+tree} and $\OPT'\leq\OPT$, it achieves an approximation ratio of $\ln2+1-\varepsilon+2=3+\ln2-\varepsilon$.

\textbf{Case~2: $(1-\varepsilon)\cdot\OPT'< (2/k)\Delta$.} For this case, we use the Hamiltonian $C$ in Lemma~\ref{good} to call the cycle-partition algorithm. By the proof of Theorem~\ref{t1}, it has an approximation ratio of $3+2f(\varepsilon)$.

Therefore, the above algorithm achieves an approximation ratio of
\[
\min_{\varepsilon>0}\max\{3+\ln2-\varepsilon,\ 3+2f(\varepsilon)\}.
\]
By calculation, we may set $\varepsilon=1/9000$. The approximation ratio is $3+\ln2-1/9000$.
\end{proof}

By calculation, we observe that the result $3+\ln2-\Theta(1/\sqrt{k})$ in Theorem~\ref{t3} is better than $3+\ln2-1/9000$ in Theorem~\ref{t4} for any $k<3\times 10^8$. Note that for unsplittable $k$-MCVRP we cannot obtain further improvements using the same method. The reason is that even using an optimal Hamiltonian cycle the LP-based cycle-partition achieves an approximation ratio of only about $3+\ln2$ by Theorem~\ref{lp-cycle-partition}. So, there is no improvement compared with the LP-based tree-partition algorithm.

\section{The LP-Based Cycle-Partition Algorithm}\label{A.2}
In this section, we fix a constant $0<\delta<1$. For each customer $v\in V$, we say that it is \emph{big} if $d(v)\geq \delta k$, and \emph{small} otherwise.
Let $V_b\coloneqq\{v\mid d(v)\geq \delta k,v\in V\}$ be the set of big customers and $V_s\coloneqq V\setminus V_b$ be the set of small customers. Moreover, we define $\Delta_b\coloneqq\sum_{v\in V_b}d(v)c(o,v)$ and $\Delta_s\coloneqq\sum_{v\in V_s}d(v)c(o,v)$.

As mentioned, the LP-based cycle-partition algorithm mainly contains two phrases. The first is to obtain a partial solution with a weight of at most $\ln2\cdot\OPT$ based on the randomized LP-rounding for big customers; the second is to construct tours with an expected weight of $\frac{1}{1-\delta}(2/k)\Delta+2c(C)$ for customers that are still unsatisfied, where $C$ is a given Hamiltonian cycle in $H$. Note that the second phrase is based on a variant of the UITP algorithm for unsplittable $k$-CVRP, proposed in~\cite{uncvrp}.

\textbf{The first phrase.}
For unsplittable $k$-MCVRP, each feasible tour contains at most $1/\delta$ big customers. So, there are at most $mn^{O(1/\delta)}$ feasible tours containing only big customers. Denote the set of these tours by $\C$, and define a variable $x_C$ for each $C\in\C$. We have the following LP.
\begin{alignat}{3}
\text{minimize} & \quad & \sum_{C\in\C} w(C)\cdot x_C \notag\\
\text{subject to} & \quad & \sum_{\substack{C\in \C:\\ v\in C}}x_C \geq\   & 1, \quad && \forall\  v \in V, \notag\\
&& x_C \geq\   & 0, \quad && \forall\  C \in \C. \notag
\end{alignat}
Given an optimal solution of unsplittable $k$-MCVRP, one can shortcut all small customers to obtain a solution for big customers without increasing the weight. Since this LP considers all feasible tours on big customers, its value provides a lower bound for unsplittable $k$-MCVRP.
We obtain a set of tours, denoted as $\C_1$, by the randomized rounding: for each tour $C\in\C$, we select it into $\C_1$ with a probability of $\min\{\ln2\cdot x_C, 1\}$; For each customer appearing in multiple tours in $\C_1$, we execute shortcutting on the tours to ensure that the customer appears in only one tour.

Let $\widetilde{V}$ be the left customers that are still unsatisfied.
Friggstad~\emph{et al.}~\cite{uncvrp} proposed a variant of the UITP algorithm for unsplittable $k$-CVRP. In the following, we call it the \emph{$\delta$-UITP} algorithm. The idea of the $\delta$-UITP algorithm is still to split a Hamiltonian cycle in $H$ into segments in a good way, with each containing at most $k$ of demand, and for each segment assign two edges between the depot $o$ and endpoints of the segment. 
We have the following lemma.

\begin{lemma}[\cite{uncvrp}]\label{UITP+}
Given a Hamiltonian cycle $C$ in $H$, for unsplittable $k$-CVRP with any constant $\delta>0$, the $\delta$-UITP algorithm can use polynomial time to output a solution of cost at most $\frac{1}{1-\delta}(2/k)\Delta_s+\frac{1}{1-\delta}(4/k)\Delta_b+c(C)$.
\end{lemma}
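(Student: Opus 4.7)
The plan is to generalize the classical Altinkemer--Gavish UITP reduction by exploiting the fact that small customers ($d(v)<\delta k$) behave essentially ``splittably'' while only big customers ($d(v)\geq \delta k$) require the unsplittable pairing overhead. Concretely, the $\delta$-UITP algorithm I would design takes the Hamiltonian cycle $C$, shortcuts $o$ out of it to obtain a cyclic sequence of all customers, and then partitions this sequence into segments so that (a) between consecutive big customers the cumulative small-customer demand is at most $(1-\delta)k$, and (b) each big customer acts as a forced cut point. Each segment is then realized as a tour of the form $o\to a\to\cdots\to b\to o$, where $a,b$ are the endpoints of the segment on $C$.

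For the feasibility analysis, I would argue that a pure small-customer segment carries demand at most $(1-\delta)k\leq k$ by construction, so attaching it to $o$ is legal. For a segment whose only big customer has demand $d(v)\in[\delta k,k]$, I would apply the Altinkemer--Gavish half-capacity pairing: first partition the small customers around $v$ with an effective capacity of $(1-\delta)k/2$ so that adjacent half-segments can be paired without ever exceeding $k$ total demand, and then decide unsplittably which pair retains $v$. Since each small customer has demand less than $\delta k$, the standard ``round the split customer to the adjacent segment'' step adds at most $\delta k$ to the demand of a segment of size $(1-\delta)k$, keeping the resulting tour within capacity $k$.

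The cost bound then follows by the usual ITP accounting. Each edge of $C$ appears in at most one tour (the one that covers the segment containing that edge), so the total cycle-edge cost is at most $c(C)$. The remaining ``spoke'' cost is obtained by averaging over the $|V|$ rotations of the cyclic sequence: a small customer $v$ ends up as a segment endpoint with probability at most $\frac{d(v)}{(1-\delta)k}$, contributing $\frac{2c(o,v)}{(1-\delta)k}\cdot d(v)$ in expectation, and summing yields $\frac{1}{1-\delta}(2/k)\Delta_s$. For a big customer, the UITP half-capacity pairing incurs an extra factor of $2$, giving $\frac{1}{1-\delta}(4/k)\Delta_b$. Picking the best rotation (or derandomizing by conditional expectations) upgrades the expected bound to a deterministic one.

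The main obstacle is the interface between the two regimes. Naively applying ITP with capacity $(1-\delta)k$ to a sequence that interleaves small and big customers would let a big customer land in the middle of a ``small'' segment and either violate capacity or inflate the cost factor on $\Delta_s$ to $4$ instead of $2$. Treating each big customer as a forced cut point decouples the two analyses, so the spoke cost associated with $\Delta_s$ stays at the splittable factor $2$ while the cost associated with $\Delta_b$ picks up only the standard UITP factor $4$, each with the $\frac{1}{1-\delta}$ slack induced by using effective capacity $(1-\delta)k$. Verifying that this forced-cut step does not double-count cycle edges at the segment boundaries, and that the rounding of split small customers preserves both the capacity and the averaging identity, is the calculation I would need to be most careful about.
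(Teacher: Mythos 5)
This lemma is cited from Friggstad et al.~\cite{uncvrp}; the paper does not reprove it, only invokes it, so there is no ``paper's proof'' for your attempt to track. Your reconstruction has the right overall shape — treat small customers with the splittable factor $2/((1-\delta)k)$, treat big customers with a UITP-style factor $4/((1-\delta)k)$, and work at effective capacity $(1-\delta)k$ so that a boundary small customer (demand $<\delta k$) can be rounded into an adjacent segment without pushing it past $k$. But two parts of the sketch do not hang together, and they are exactly the parts that would need to be nailed down to turn the intuition into a proof.

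First, the algorithm description is internally inconsistent. You declare each big customer a forced cut point, so it sits at a segment boundary and is not \emph{inside} any segment; yet in the next step you speak of a segment ``whose only big customer has demand $d(v)$'' and of deciding unsplittably ``which pair retains $v$''. These cannot both hold. If you commit to cutting at big customers, you must then say how a big customer of demand up to $k$ is served within capacity $k$ — a trivial tour costs $2c(o,v)\le \frac{2}{\delta k}d(v)c(o,v)$, which sums to $\frac{2}{\delta k}\Delta_b$ and is far worse than the claimed $\frac{4}{(1-\delta)k}\Delta_b$ when $\delta$ is small — while attaching $v$ to a neighboring segment of demand up to $(1-\delta)k$ can overflow $k$ whenever $d(v)>\delta k$, i.e., for every big customer. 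If instead you commit to half-capacity ITP over the whole sequence, you get the factor $\frac{4}{(1-\delta)k}$ on all of $\Delta$, including $\Delta_s$, and the claimed separation disappears. Some additional idea is needed to obtain factor $2$ on $\Delta_s$ and factor $4$ on $\Delta_b$ simultaneously, and the sketch does not supply it.

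Second, the rotation-averaging step is underjustified once you introduce forced cuts. The standard ITP bound $\Pr[v\text{ is cut}]\le d(v)/Q$ relies on cuts being placed uniformly modulo $Q$ demand around the cycle; forcing cuts at big customers breaks that uniformity, and the bound has to be re-derived arc by arc between consecutive forced cuts, with care that a partial last block on each arc is not charged twice and that boundary cycle edges are not double-counted. You flag this yourself at the end as ``the calculation I would need to be most careful about'', which is an honest assessment: the proposal conveys the right intuition behind the Friggstad et al. bound, but as written it is not a proof of Lemma~\ref{UITP+}.
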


\textbf{The second phrase.} Given a Hamiltonian cycle $C$ in $H$, we can get a Hamiltonian cycle $\widetilde{C}$ in $H[\widetilde{V}\cup\{o\}]$ by shortcutting. To satisfy customers in $\widetilde{V}$, we construct a set of tours, denoted as $\C_2$, by calling the $\delta$-UITP algorithm along with the Hamiltonian cycle $\widetilde{C}$ for unsplittable $k$-CVRP in $H[\widetilde{V}\cup\{o\}]$.
Then, we modify the tours in $\C_2$ into a set of feasible tours, denoted as $\C'_2$, for unsplittable $k$-MCVRP.

The modification works as follows.
With respect to the tours in $\C_2$ in $H$, we otain a set of minimal walks, denoted as $\W$, in $G$, where each walk starts at a depot and ends at a depot in $D$.
$\W$ may contain a walk $u_sv_i\dots v_ju_t$ that starts at one depot $u_s$ and ends at a different depot $u_t$ (the middle vertices are all customers). However, this is not feasible since for $k$-MCVRP each vehicle must return to the depot from which it started.
Let $w(v_i\dots v_j)$ denote the sum of the weights of edges in the segment $v_i\dots v_j$, which is derived from splitting the Hamiltonian cycle $\widetilde{C}$ by the $\delta$-UITP algorithm.
We consider two possible modifications: $u_sv_i\dots v_ju_s$ and $u_tv_i\dots v_ju_t$, which correspond to two feasible tours. By the triangle inequality, the better one has a weight of at most
\[
w(u_s,v_i)+2w(v_i\dots v_j)+w(v_j,u_t).
\]
This modification introduces an additional weight for the segment $v_i\dots v_j$. Since the segments obtained from splitting a Hamiltonian cycle using the $\delta$-UITP algorithm are edge-disjoint~\cite{uncvrp}, after modifying all these walks, we obtain a set of feasible tours $\C'_2$ for $k$-MCVRP incurring an additional cost of at most $c(\widetilde{C})$.

The LP-based cycle-partition algorithm is formally shown in Algorithm~\ref{algo:lp-cycle-partition}.

\begin{algorithm}[t]
\caption{An LP-based cycle-partition algorithm for unsplittable $k$-MCVRP}
\label{algo:lp-cycle-partition}
\small
\vspace*{2mm}
\textbf{Input:} Two undirected complete graphs: $G=(V\cup D, E)$ and $H=(V\cup\{o\}, F)$, a Hamiltonian cycle $C$ in graph $H$, and a constant $\delta>0$. \\
\textbf{Output:} A solution for unsplittable $k$-MCVRP.

\begin{algorithmic}[1]
\State Solve the LP in $m^{O(1)}n^{O(1/\delta)}$ time.
\For{$C'\in \C$}
Put the tour $C'$ into solution with a probability of $\min\{\ln2\cdot x_{C'}, 1\}$.
\EndFor
\State For each customer appearing in multiple tours, we execute shortcutting on the tours to ensure that the customer appears in only one tour. \Comment{Due to the randomized rounding, some customers might appear in more than one tour.}
\State Let $\widetilde{V}$ represent the set of customers that remain unsatisfied.
\State Obtain two new undirected complete graphs: $\widetilde{G}=G[\widetilde{V}\cup D]$ and $\widetilde{H}=G[\widetilde{V}\cup \{o\}]$, and a Hamiltonian cycle $\widetilde{C}$ in $\widetilde{H}$ by shortcutting $C$.
\State Call the $\delta$-UITP algorithm in Lemma~\ref{UITP+} along with $\widetilde{C}$ to obtain a set of tours $\C_2$ in $\widetilde{H}$ for unsplittable $k$-CVRP.
\State Obtain a set of minimal walks $\W$ that starts at a depot and ends at a depot in $\widetilde{G}$ with respect to the tours in $\C_2$.
\While{$W=u_sv_i\dots v_ju_t\in\W$ such that $u_s$ and $u_t$ are different depots}
\State Modify it to the better tour from $u_sv_i\dots v_ju_s$ and $u_tv_i\dots v_ju_t$.
\EndWhile
\State Obtain a set of feasible tours $\C'_2$ in $G$ with respect to $\W$, and put them into solution.
\end{algorithmic}
\end{algorithm}

The total cost of the LP-based cycle-partition algorithm is $w(\C_1)+w(\C'_2)$. Next, we analyze the expected weight of $\C_1$ and $\C'_2$, following a framework commonly used for the randomized LP-rounding method~\cite{williamson2011design,uncvrp}.

\begin{lemma}\label{zlp1}
It holds that $\EE{w(\C_1)}\leq \ln2\cdot\OPT$.
\end{lemma}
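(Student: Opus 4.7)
The plan is to establish two things: that the LP optimum is a lower bound on $\OPT$, and that the expected weight of the rounded collection is at most $\ln 2$ times the LP optimum. For the first claim, I would exhibit a feasible integral LP solution of weight at most $\OPT$. Take any optimal solution for unsplittable $k$-MCVRP and, in each of its tours, shortcut every small customer, i.e., bypass it using the triangle inequality. The resulting walks have total weight at most $\OPT$ and each one is a tour containing only big customers. Its capacity constraint is preserved since total delivered demand can only decrease under shortcutting, and it belongs to $\C$ because each such tour contains at most $\lfloor 1/\delta\rfloor$ big customers (each contributing at least $\delta k$ demand). By unsplittability of the starting solution, every big customer is covered by exactly one of these shortcut tours, so setting $x_C=1$ for each of them yields a feasible LP solution; hence $\sum_{C\in\C} x_C\cdot w(C)\leq \OPT$ at the LP optimum.

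For the second claim, linearity of expectation gives, before any clean-up,
\[
\sum_{C\in\C}\min\{\ln 2\cdot x_C,\, 1\}\cdot w(C) \;\leq\; \ln 2\cdot \sum_{C\in\C} x_C\cdot w(C)\;\leq\; \ln 2\cdot \OPT.
\]
The subsequent shortcutting step (removing repeated customers from all but one selected tour) can only decrease weight by the triangle inequality, so $\EE{w(\C_1)}\leq \ln 2\cdot \OPT$.

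The only real subtlety is verifying that the LP is genuinely a relaxation of the big-customer part of unsplittable $k$-MCVRP, namely that shortcutting small customers in an optimum indeed produces tours lying in $\C$ (which uses the fact that a big customer consumes at least $\delta k$ capacity) and covers each big customer exactly once (which uses unsplittability). Once this is pinned down, the rest is a direct application of linearity of expectation together with the triangle inequality, mirroring the argument used for Lemma~\ref{lp1}.
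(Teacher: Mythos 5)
Your proof is correct and follows essentially the same route as the paper: the paper establishes that the LP is a relaxation via the same shortcut-the-small-customers argument (stated just before the lemma) and then applies linearity of expectation plus the triangle inequality exactly as you do. You simply unfold the lower-bound justification inline where the paper cites it as already established.
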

\begin{proof}
Since LP provides a lower bound of unsplittable $k$-MCVRP, $\sum_{C\in\C}x_C\cdot w(C)\leq \OPT$. Each tour $C\in\C$ is chosen with a probability of at most $\ln2\cdot x_C$. So, the chosen tours have an expected weight of at most $\sum_{C\in\C}\ln2\cdot x_C\cdot w(C)\leq\ln2\cdot\OPT$. The shortcutting in Step~\ref{cleanup} will not increase the weight by the triangle inequality. So, we have $\EE{w(\C_1)}\leq\ln2\cdot\OPT$.
\end{proof}

\begin{lemma}\label{zlp2}
It holds that $\PP{v\in\widetilde{V}\mid v\in V_s}=1$ and $\PP{v\in\widetilde{V}\mid v\in V_b}\leq 1/2$.
\end{lemma}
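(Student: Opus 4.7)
\textbf{Proof proposal for Lemma~\ref{zlp2}.}
The plan is to handle the two conditional probabilities separately, exploiting the fact that the LP in the first phase only contains tours built out of big customers.

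First I would dispose of the small-customer case. By construction, the collection $\C$ over which the LP is defined consists solely of tours whose customers all lie in $V_b$. Therefore, for any $v\in V_s$, no tour in $\C$ can possibly cover $v$, and the rounding step in Algorithm~\ref{algo:lp-cycle-partition} cannot remove $v$ from the set of unsatisfied customers. Since the shortcutting in the cleanup step also never satisfies a customer that was not on some chosen tour, $v\in\widetilde{V}$ with probability one, giving $\PP{v\in\widetilde{V}\mid v\in V_s}=1$.

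For the big-customer case, fix $v\in V_b$ and let $\C_v\coloneqq\{C\in\C : v\in C\}$. The LP feasibility constraint gives $\sum_{C\in\C_v} x_C \geq 1$. Since the tours in $\C$ are rounded independently, the event $\{v\in\widetilde{V}\}$ is precisely the event that no tour in $\C_v$ is selected, so
\[
\PP{v\in\widetilde{V}\mid v\in V_b}=\prod_{C\in\C_v}\bigl(1-\min\{\ln2\cdot x_C,1\}\bigr).
\]
If some $C\in\C_v$ has $\ln2\cdot x_C\geq 1$, the corresponding factor is zero and the whole product vanishes, which is already below $1/2$. Otherwise every factor equals $1-\ln2\cdot x_C$, and using $1-t\leq e^{-t}$ for $t\in[0,1]$ together with the LP constraint gives
\[
\prod_{C\in\C_v}(1-\ln2\cdot x_C)\leq\prod_{C\in\C_v}e^{-\ln2\cdot x_C}=e^{-\ln2\cdot\sum_{C\in\C_v}x_C}\leq e^{-\ln2}=\tfrac{1}{2},
\]
which yields the desired bound.

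I expect no real obstacle here: the argument is the standard randomized set-cover analysis specialised to the rounding parameter $\ln 2$, and the only subtlety is making sure the small-customer half is justified purely from the fact that the LP's columns are restricted to tours over $V_b$, rather than from any probabilistic reasoning. The statement is essentially a sanity check that sets up the expected-cost computation for $w(\C_2')$ in the next lemma, where the factor $1/(1-\delta)$ in Lemma~\ref{UITP+} is paid only for the big-customer contribution, which now appears with probability at most $1/2$.
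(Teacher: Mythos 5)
Your proof is correct and follows essentially the same argument as the paper: the small-customer case holds because the LP columns are restricted to tours over $V_b$, and the big-customer bound follows from independence, the LP covering constraint $\sum_{C\in\C_v}x_C\geq 1$, and $1-t\leq e^{-t}$, yielding $e^{-\ln 2}=1/2$. Your explicit handling of the case $\ln 2\cdot x_C\geq 1$ (where a factor vanishes) is slightly more careful than the paper's writeup, but it is not a different approach.
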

\begin{proof}
Clearly, small customers are still not satisfied since tours in LP are only for big customers. Hence, $\PP{v\in\widetilde{V}\mid v\in V_s}=1$. For a big customer $v\in V_b$, let $\C_v$ be the set of tours in $\C$ that contain $v$.
Then, we have
\[
\PP{v\in\widetilde{V}\mid v\in V_b}=\prod_{C\in\C_v}\PP{C~\mbox{is}~\mbox{not}~\mbox{chosen}}=\prod_{C\in\C_v}\lrA{1-\min\{\ln2\cdot x_C,\ 1\}}.
\]
We assume that $\min\{\ln2\cdot x_C,\ 1\}=\ln2\cdot x_C$ for any $C\in\C_v$; otherwise, $\PP{v\in\widetilde{V}\mid v\in V_b}=0\leq 1/2$, which holds trivially. Then, we have
\[
\PP{v\in\widetilde{V}\mid v\in V_b}=\prod_{C\in\C_v}(1-\ln2\cdot x_C)\leq e^{-\sum_{C\in\C_v}\ln2\cdot x_C}\leq e^{-\ln2}=\frac{1}{2},
\]
where the first inequality follows from $1-x\leq e^{-x}$ for any $0\leq x\leq 1$, and the second from $\sum_{v\in\C_v}x_C\geq 1$ by the definition of the LP.
\end{proof}

\begin{lemma}\label{zlp3}
It holds that $\EE{w(\C'_2)}\leq \frac{1}{1-\delta}(2/k)\Delta+2c(C)$.
\end{lemma}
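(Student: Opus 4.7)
The plan is to decompose $\EE{w(\C'_2)}$ into the cost produced by the $\delta$-UITP subroutine on the residual instance, plus the extra cost of turning two-depot walks back into single-depot tours, and handle each piece in turn. (I read the claimed bound as $\tfrac{1}{1-\delta}\cdot\tfrac{2}{k}\Delta+2c(C)$, matching the weight stated in the informal description preceding Algorithm~\ref{algo:lp-cycle-partition}.)

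First I would apply Lemma~\ref{UITP+} to the actual input that the algorithm feeds to $\delta$-UITP, namely the Hamiltonian cycle $\widetilde{C}$ in the graph $\widetilde{H}$. Setting $\widetilde{\Delta}_s=\sum_{v\in\widetilde{V}\cap V_s}d(v)c(o,v)$ and $\widetilde{\Delta}_b=\sum_{v\in\widetilde{V}\cap V_b}d(v)c(o,v)$, that lemma yields
\[
w(\C_2)\le \tfrac{1}{1-\delta}\cdot\tfrac{2}{k}\,\widetilde{\Delta}_s+\tfrac{1}{1-\delta}\cdot\tfrac{4}{k}\,\widetilde{\Delta}_b+c(\widetilde{C}).
\]
Taking expectations and invoking Lemma~\ref{zlp2}, a small customer always survives the LP-rounding so $\EE{\widetilde{\Delta}_s}=\Delta_s$, while a big customer survives with probability at most $1/2$ so $\EE{\widetilde{\Delta}_b}\le \tfrac{1}{2}\Delta_b$. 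The coefficient $4/k$ in front of $\widetilde{\Delta}_b$ then halves to $2/k$, matching the coefficient in front of $\widetilde{\Delta}_s$, and the two terms merge into $\tfrac{1}{1-\delta}\cdot\tfrac{2}{k}\Delta$. The whole purpose of the LP-rounding step is precisely to buy this halving of the big-customer overhead, so this is the one real calculation in the proof.

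Next I would account for the walk-to-tour modification. By the construction of $\W$, each walk corresponds to one contiguous segment $v_i\cdots v_j$ of $\widetilde{C}$, and for a walk $u_sv_i\cdots v_ju_t$ with $u_s\ne u_t$ the algorithm replaces it by the cheaper of $u_sv_i\cdots v_ju_s$ and $u_tv_i\cdots v_ju_t$, whose weight is at most $w(u_s,v_i)+2w(v_i\cdots v_j)+w(v_j,u_t)$ by the triangle inequality, i.e.\ at most $w(v_i\cdots v_j)$ more than the original walk already paid. Summing over $\W$ and using that the underlying segments form an edge-disjoint partition of $\widetilde{C}$, the total modification overhead telescopes to at most $c(\widetilde{C})$, so $w(\C'_2)\le w(\C_2)+c(\widetilde{C})$. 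The main obstacle I anticipate is exactly this telescoping: a careless bound of $w(u_s,v_i)+w(v_j,u_t)$ per walk could double-count the edges incident to $o$ in $\C_2$, so I would want to be explicit that $\delta$-UITP splits $\widetilde{C}$ into pairwise edge-disjoint demand-respecting segments and that the walks in $\W$ are in bijection with these segments.

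Combining the two pieces gives $\EE{w(\C'_2)}\le \tfrac{1}{1-\delta}\cdot\tfrac{2}{k}\Delta+2c(\widetilde{C})$, and the final step is simply $c(\widetilde{C})\le c(C)$, which holds because $\widetilde{C}$ is obtained by shortcutting $C$ in the semi-metric $c$. Together these give the claimed inequality.
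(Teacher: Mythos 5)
Your proof is correct and follows essentially the same route as the paper: apply Lemma~\ref{UITP+} on $\widetilde{C}$, use Lemma~\ref{zlp2} to halve the expected big-customer contribution so that the $\frac{4}{k}\widetilde{\Delta}_b$ term merges with the $\frac{2}{k}\widetilde{\Delta}_s$ term into $\frac{2}{k}\Delta$, charge the walk-to-tour modification at most $c(\widetilde{C})\le c(C)$, and combine. You also correctly spotted that the lemma statement as printed drops the factor $2/k$ in front of $\Delta$; the bound you prove, $\frac{1}{1-\delta}\cdot\frac{2}{k}\Delta+2c(C)$, is what the paper's own proof actually establishes and what Theorem~\ref{lp-cycle-partition} uses.
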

\begin{proof}
Recall that $\widetilde{C}$ is obtained by shortcutting $C$ and tours in $\C'_2$ are obtained by modifying tours in $\C_2$ using an additional cost of $c(\widetilde{C})$. We have $w(\C'_2)\leq c(\C_2)+c(\widetilde{C})\leq c(\C_2)+c(C)$ by the triangle inequality. Since $\C_2$ is a set of tours obtained by calling the $\delta$-UITP algorithm along with the Hamiltonian cycle $\widetilde{C}$ for unsplittable $k$-CVRP in $H[\widetilde{V}\cup\{o\}]$, by Lemmas~\ref{UITP+} and \ref{zlp2}, we have
\begin{align*}
\EE{c(\C_2)}&\leq \frac{1}{1-\delta}(2/k)\sum_{v\in V_s}d(v)c(o,v)\cdot\PP{v\in\widetilde{V}\mid v\in V_s}\\
&\quad+\frac{1}{1-\delta}(4/k)\sum_{v\in V_b}d(v)c(o,v)\cdot\PP{v\in\widetilde{V}\mid v\in V_b}+c(C)\\
&\leq \frac{1}{1-\delta}(2/k)\Delta_s+\frac{1}{1-\delta}(2/k)\Delta_b+c(C)\\
&=\frac{1}{1-\delta}(2/k)\Delta+c(C),
\end{align*}
where the last equality follows from $\Delta_s+\Delta_b=\Delta$ by definition. Therefore, we can get $\EE{w(\C'_2)}\leq \frac{1}{1-\delta}(2/k)\Delta+2c(C)$.
\end{proof}

\begingroup
\def\thetheorem{\ref{lp-cycle-partition}}
\begin{theorem}
Given a Hamiltonian cycle $C$ in $H$, for unsplittable $k$-MCVRP with any constant $\delta>0$, the LP-based cycle-partition algorithm can use polynomial time to output a solution of cost at most $(\ln2+\delta)\cdot\OPT+(2/k)\Delta+2c(C)$.
\end{theorem}
\begin{proof}
By Lemmas~\ref{zlp1} and \ref{zlp3}, for unsplittable $k$-MCVRP with any constant $0<\delta'<1$, the LP-based cycle-partition algorithm has an expected weight of at most
\begin{align*}
&\ln2\cdot\OPT+\frac{1}{1-\delta'}(2/k)\Delta+2c(C)\\
&=\ln2\cdot\OPT+\frac{\delta'}{1-\delta'}(2/k)\Delta+(2/k)\Delta+2c(C),\\
&\leq \lrA{\ln2+\frac{\delta'}{1-\delta'}}\cdot\OPT+(2/k)\Delta+2c(C),
\end{align*}
where the inequality follows from $(2/k)\Delta\leq\OPT$ by Lemma~\ref{lb-delta}.

For any constant $\delta>0$, when setting $\delta'\leq \frac{\delta}{\delta+1}$, the expected weight is bounded by
\[
\lra{\ln2+\delta}\cdot\OPT+(2/k)\Delta+2c(C)
\]

Since the algorithm can be efficiently derandomized using the method of conditional expectations~\cite{williamson2011design,uncvrp}, Theorem~\ref{lp-cycle-partition} holds.
\end{proof}
\endgroup

\section{Proof of Theorem~\ref{t3}}~\label{A.3}
\begingroup
\def\thetheorem{\ref{t3}}
\begin{theorem}
For unsplittable, splittable, and unit-demand $k$-MCVRP with any fixed $k\in\mathbb{Z}_{\geq 3}$, by running the LP-based tree-partition algorithm with $\gamma=0$ and $\gamma=\gamma^*$ twice, we can obtain a solution in polynomial time with an approximation ratio of at most $\max\{g(\ceil{z_0}), g(\floor{z_0})\}$, where
\begin{itemize}
    \item $z_0\coloneqq\frac{\sqrt{4k+5}-1}{2}$,
    \item $g(z)\coloneqq3+\ln\left(\frac{k+1-z}{\floor{k/2}+1}\right)-\frac{1}{z}$,
    \item $\gamma^*\coloneqq\ln\left(\frac{k+1-\ceil{z_0}}{\floor{k/2}+1}\right)$ if $g(\ceil{z_0})\geq g(\floor{z_0})$ and $\gamma^*\coloneqq\ln\left(\frac{k+1-\floor{z_0}}{\floor{k/2}+1}\right)$ otherwise.
\end{itemize}
\end{theorem}
\begin{proof}
Recall that with the best choice of $\gamma$ the approximation ratio of the LP-based tree-partition algorithm is at most
\[
\min_{\substack{\gamma\geq 0}}\max_{\substack{x_1,\dots, x_{m'}\geq 0\\ x_1+\dots+x_{m'}=1}}\lrC{\gamma+e^{-\gamma}\cdot\frac{2}{\floor{k/2}+1}\lrA{\frac{k+2}{2}-\sum_{i=1}^{m'}ix_i}+2\lrA{1-\frac{1}{2}\max_{1\leq i\leq m'}x_i}}.
\]
Although $m'$ is required to be $\ceil{\frac{k+1}{2}}$, for the sake of analysis, we relax this constraint so that $m'=\infty$. Next, we explore some properties of $x_i$ in the worst case. 

Firstly, it is easy to observe that in the worst case we have $$x_1\geq x_2\geq\cdots x_{m'}$$ since if $x_p<x_q$ for some $p<q$ we can swap theirs values to get a larger approximation ratio. Hence, we have $$\max_{1\leq i\leq m'}x_i=x_1.$$
Moreover, if $x_1$ is fixed, in the worst case $\frac{k+2}{2}-\sum_{i=1}^{m'}ix_i$ should be maximized under the constants $x_1\geq x_2\geq\cdots\geq x_{m'}$ and $\sum_{i=1}^{m'}x_i=1$.
To maximize $\frac{k+2}{2}-\sum_{i=1}^{m'}ix_i$, we can greedily set $x_2$,..., $x_{m'}$, where for each $i$ with $2\leq i\leq m$, we set $x_i=x_{i-1}$ if $1-x_1-\cdots- x_{i-1}\geq x_{i-1}$, and $x_i=1-x_1-\cdots -x_{i-1}$ otherwise.
Hence, we can get
\[
x_1=x_2=x_3=\cdots =x_l>x_{l+1}\quad \mbox{and}\quad x_{l+2}=x_{l+3}=\cdots=x_{m'}=0,
\]
where $l=\floor{\frac{1}{x_1}}$ and $x_{l+1}=1-lx_1$.
Then, we have
\[
\sum_{i=1}^{m'}ix_i=\sum_{i=1}^{l}ix_1+(l+1)x_{l+1}=\frac{l(l+1)x_1}{2}+(l+1)(1-lx_1)=\frac{-l^2x_1-lx_1+2l+2}{2}.
\]
Therefore, the approximation ratio is given by
\[
\min_{\substack{\gamma\geq 0}}\max_{0<x_1\leq 1}\lrC{\gamma+e^{-\gamma}\cdot\frac{k-2l+l^2x_1+lx_1}{\floor{k/2}+1}+2-x_1}.
\]

Let
\[
f(\gamma, x)\coloneqq\gamma+e^{-\gamma}\cdot\frac{k-2l+l^2x+lx}{\floor{k/2}+1}+2-x,\quad \mbox{where}\quad l=\Floor{\frac{1}{x}}.
\]
The approximation ratio is given by $\min_{\substack{\gamma\geq 0}}\max_{0<x\leq 1}f(\gamma, x)$. Then, we explore some properties of $f(\gamma, x)$.

For any $z\in\mathbb{Z}_{\geq1}$, it is easy to verify
\[
\lim_{x\rightarrow\lrA{\frac{1}{z}}^-}f(\gamma, x)=\lim_{x\rightarrow\lrA{\frac{1}{z}}^+}f(\gamma, x).
\]
Moreover, under $x\in[\frac{1}{z+1},\frac{1}{z}]$, we can get
\[
k-z\leq\frac{k-2l+l^2x+lx}{\floor{k/2}+1}\leq k-z+1.
\]
Since $k\in\mathbb{Z}_{\geq1}$, we known that
\[
\frac{k-2l+l^2x+lx}{\floor{k/2}+1}\leq1\quad\Longleftrightarrow\quad x\leq \frac{1}{k-\floor{k/2}}.
\]
The approximation ratio is given by
\[
\max\lrC{\min_{\substack{\gamma\geq 0}}\max_{0<x\leq \frac{1}{k-\floor{k/2}}}f(\gamma, x),\quad\min_{\substack{\gamma\geq 0}}\max_{\frac{1}{k-\floor{k/2}}<x\leq 1}f(\gamma, x)}
\]

\textbf{Case~1: $0<x\leq \frac{1}{k-\floor{k/2}}$.} We consider the LP-based tree-partition algorithm with $\gamma=0$. We can get
\[
\max_{0<x\leq \frac{1}{k-\floor{k/2}}}f(0, x)=\frac{k-2l+l^2x+lx}{\floor{k/2}+1}+2-x.
\]
Since $0<x\leq \frac{1}{k-\floor{k/2}}$, we can get $l\geq k-\floor{k/2}$. For any $x\in(\frac{1}{z+1}, \frac{1}{z}]$ with $z\in\mathbb{Z}_{\geq k-\floor{k/2}}$, since $k\in\mathbb{Z}_{\geq 3}$, it is easy to verify that
\[
\frac{\partial f}{\partial x}=\frac{l^2+l}{\floor{k/2}+1}-1\geq 0.
\]
Hence, we have
\begin{align*}
\max_{0<x\leq \frac{1}{k-\floor{k/2}}}f(0, x)&\leq f\lrA{0,\frac{1}{k-\floor{k/2}}}\\
&=\frac{k-(k-\floor{k/2})+1}{\floor{k/2}+1}+2-\frac{1}{k-\floor{k/2}}\\
&=3-\frac{1}{k-\floor{k/2}}.
\end{align*}
The approximation ratio is bounded by $3-\frac{1}{k-\floor{k/2}}$.

\textbf{Case~2: $\frac{1}{k-\floor{k/2}}\leq x\leq 1$.} Note that
\[
\left. \frac{\partial f}{\partial \gamma} \right|_{\gamma=\ln\lrA{\frac{k-2l+l^2x+lx}{\floor{k/2}+1}}}=0.
\]
Let
\[
h(x)\coloneqq f\lrA{\ln\lrA{\frac{k-2l+l^2x+lx}{\floor{k/2}+1}}, x}=3+\ln\lrA{\frac{k-2l+l^2x+lx}{\floor{k/2}+1}}-x.
\]
Hence, we can get
\begin{align*}
\min_{\gamma\geq0}\max_{\frac{1}{k-\floor{k/2}}\leq x\leq 1}f(\gamma, x)&\leq \max_{x\leq 1}h(x).
\end{align*}
For any $x\in(\frac{1}{z+1}, \frac{1}{z})$ with $z\in\mathbb{Z}_{\geq 1}$, we have $l=\floor{\frac{1}{x}}=z$, and then we can get
\[
h'(x)=\frac{z^2+z}{k-2z+z^2x+zx}-1=\frac{(z^2+z)(1-x)+2z-k}{(z^2+z)x-2z+k}
\]
Note that
\[
(z^2+z)(1-x)+2z-k\in(z^2+2z-k-1,z^2+2z-k)~\mbox{and}~(z^2+z)x-2z+k\in(k-z,k-z+1).
\]
Hence, it holds $h'(x)>0$ or $h'(x)<0$ for any $x\in(\frac{1}{z+1}, \frac{1}{z})$. It is also easy to verify that
\[
\lim_{x\rightarrow\lrA{\frac{1}{z}}^-}h(x)=\lim_{x\rightarrow\lrA{\frac{1}{z}}^+}h(x).
\]
Then, we know that $h(x)$ attains its maximum value only if $1/x$ is an integer.

Let
\[
g(z)\coloneqq 3+\ln\left(\frac{k-2z+z^2\cdot\frac{1}{z}+z\cdot\frac{1}{z}}{\floor{k/2}+1}\right)-\frac{1}{z}=3+\ln\left(\frac{k-z+1}{\floor{k/2}+1}\right)-\frac{1}{z},\quad z\geq 1.
\]
For any $z\in\mathbb{Z}_{\geq 1}$, we have $g(z)=h(\frac{1}{z})$. Hence, the approximation ratio is given by
\[
\max_{z\in\mathbb{Z}_{\geq 1}}g(z).
\]
We have
\[
g'(z)=\frac{-1}{k-z+1}+\frac{1}{z^2}.
\]
Let $z_0\coloneqq\frac{\sqrt{4k+5}-1}{2}$.
When $z=z_0$, we have $g'(z)=0$. Since $z_0>1$, we have
\[
\max_{z\in\mathbb{Z}_{\geq1}} g(z)=\max\{g(\ceil{z_0}), g(\floor{z_0})\}.
\]
Let $z^*\coloneqq\arg\max_{z\in\mathbb{Z}_{\geq1}} g(z)$, and $x^*\coloneqq\frac{1}{z^*}$. Note that we have $z^*=\ceil{z_0}$ if $g(\ceil{z_0})\geq g(\floor{z_0})$, and $z^*=\floor{z_0}$ otherwise. Since $\floor{\frac{1}{x^*}}=z^*$, we know that
\begin{align*}
\min_{\gamma\geq0}\max_{\frac{1}{k-\floor{k/2}}\leq x\leq 1}f(\gamma, x)&\leq \max_{x\leq 1}h(x)\\
&=\max_{z\in\mathbb{Z}_{\geq 1}}g(z)\\
&=g(z^*)\\
&=h(x^*)\\
&=f\lrA{\ln\lrA{\frac{k-2z^*+(z^*)^2x^*+z^*x^*}{\floor{k/2}+1}}, x^*}\\
&=f\lrA{\ln\lrA{\frac{k-z^*+1}{\floor{k/2}+1}}, x^*}.
\end{align*}
Note that for any $k\in\mathbb{Z}_{\geq3}$, it is easy to verify that
\begin{align*}
\frac{k-z^*+1}{\floor{k/2}+1}&\geq\frac{k-\ceil{z_0}+1}{\floor{k/2}+1}\\
&=\frac{k-\Ceil{\frac{\sqrt{4k+5}-1}{2}}+1}{\floor{k/2}+1}\\
&\geq 1.
\end{align*}
Let $\gamma^*\coloneqq\ln\lrA{\frac{k-z^*+1}{\floor{k/2}+1}}$. Note that we have $\gamma^*=\ln\left(\frac{k+1-\ceil{z_0}}{\floor{k/2}+1}\right)$ if $g(\ceil{z_0})\geq g(\floor{z_0})$, and $\gamma^*=\ln\left(\frac{k+1-\floor{z_0}}{\floor{k/2}+1}\right)$ otherwise. Then, the LP-based tree-partition algorithm with $\gamma=\gamma^*$ has an approximation ratio of at most $\max\{g(\ceil{z_0}), g(\floor{z_0})\}$.

By the analysis of the previous two cases, we have
\[
\max_{0<x\leq \frac{1}{k-\floor{k/2}}}f(0, x)=3-\frac{1}{k-\floor{k/2}},
\]
and
\[
\max_{\frac{1}{k-\floor{k/2}}<x\leq 1}f(\gamma^*, x)\leq \max\{g(\ceil{z_0}), g(\floor{z_0})\}.
\]
For any $k\in\mathbb{Z}_{\geq3}$, it is easy to verify that
\[
3-\frac{1}{k-\floor{k/2}}\leq \max\{g(\ceil{z_0}), g(\floor{z_0})\}.
\]
Hence, by running the LP-based tree-partition algorithm with $\gamma=0$ and $\gamma=\gamma^*$ twice, the better found solution has an approximation ratio of $\max\{g(\ceil{z_0}), g(\floor{z_0})\}$.
\end{proof}
\endgroup

\section{Conclusion}
In this paper, we study approximation algorithms for $k$-MCVRP. Previously, only a few results were available in the literature. Based on recent progress in approximating $k$-CVRP, we design improved approximation algorithms for $k$-MCVRP. When $k$ is general, we improve the approximation ratio to $4-1/1500$ for splittable and unit-demand $k$-MCVRP and to $4-1/50000$ for unsplittable $k$-MCVRP; when $k$ is fixed, we improve the approximation ratio to $3+\ln2-\max\{\Theta(1/\sqrt{k}),1/9000\}$ for splittable and unit-demand $k$-MCVRP and to $3+\ln2-\Theta(1/\sqrt{k})$ for unsplittable $k$-MCVRP.

We remark that for unsplittable, splittable, and unit-demand $k$-MCVRP with fixed $3\leq k\leq 11$ the current best approximation ratios are still $H_k-\Theta(\ln^2 k/k)$~\cite{gupta2023local}. In the future, one may study how to improve these results.

A more general problem than $k$-MCVRP is called \emph{Multidepot Capacitated Arc Routing} (MCARP), where both vertices and arcs are allowed to require a demand. For MCARP, the current best-known approximation algorithms on general metric graphs are still based on the cycle-partition algorithm (see~\cite{yu2021approximation}).
Some results in this paper may be applied to MCARP to obtain some similar improvements.

\backmatter


\bmhead{Acknowledgements} The work is supported by the National Natural Science Foundation of China, under grant 62372095.

\section*{Declarations}
\subsection*{Funding}
Not applicable.

\subsection*{Competing interests}
The authors declare no competing interests.

\subsection*{Ethics approval and consent to participate}
Not applicable.

\subsection*{Consent for publication}
Not applicable.

\subsection*{Data availability}
Not applicable.

\subsection*{Materials availability}
Not applicable.

\subsection*{Code availability}
Not applicable.

\subsection*{Author contribution}
All authors wrote the main manuscript text. All authors reviewed the manuscript.
\bibliography{main}
\end{document}